\crefname{subsection}{subsection}{subsections}
\newtheorem{lemma}{Lemma}
\newtheorem*{lemma*}{Lemma}
\newtheorem{thm}{Theorem}
\newtheorem*{thm*}{Theorem}
\newtheorem*{claim*}{Claim}
\newtheorem*{prop*}{Proposition}
\newtheorem{defn}{Definition}
\newtheorem*{defn*}{Definition}
\newtheorem*{cor*}{Corollary}
\newtheorem{cor}{Corollary}
\newcommand{\E}{\mathbb{E}}
\newcommand{\bP}{\mathbb{P}}
\newcommand{\Ex}[1]{\mathbb{E}\left[#1\right]}
\renewcommand{\Pr}[1]{\mbox{\rm\bf Pr}\left[#1\right]}
\newcommand{\val}{V}
\newcommand{\R}{\mathbb{R}}
\newcommand{\xv}{\mathbf{v}}
\newcommand{\rhov}{\mathbf{\rho}}
\newcommand{\one}{\mathds{1}}
\newcommand{\growingmid}{\mathrel{}\middle|\mathrel{}}
\newcommand{\pdim}{\mathrm{PDim}}
\newcommand{\vcdim}{\mathrm{VCdim}}
\newcommand{\eps}{\varepsilon}
\newcommand{\vV}{\mathbf{\val}}
\newcommand{\vD}{\mathbf{D}}
\newcommand{\argmin}{\mathrm{argmin}}
\newcommand{\High}{\mathrm{H}}
\newcommand{\Low}{\mathrm{L}}
\newcommand{\cS}{\mathcal{S}}
\newcommand{\hr}{\hat{r}}
\newcommand{\ti}{\tilde{\imath}}
\title{Sample Complexity of Posted Pricing for  a Single Item}
\author{Billy Jin\thanks{University of Chicago Booth School of Business, billy.jin@chicagobooth.edu.} \and  Thomas Kesselheim\thanks{University of Bonn, Institute of Computer Science and Lamarr Institute for Machine Learning and Artificial Intelligence, thomas.kesselheim@uni-bonn.de} \and Will Ma\thanks{Columbia University, Graduate School of Business and Data Science Institute, wm2428@gsb.columbia.edu.} \and Sahil Singla\thanks{Georgia Tech, School of Computer Science, ssingla@gatech.edu. Supported in part by NSF award CCF-2327010.}}
\begin{document}

\maketitle

\begin{abstract}
Selling a single item to $n$ self-interested buyers is a fundamental problem in economics, where the two  objectives  typically considered are welfare maximization and revenue maximization.   Since the optimal mechanisms are often impractical and do not work for sequential buyers, posted pricing mechanisms, where fixed prices are set for the item for different buyers, have emerged as a practical and effective alternative. This paper investigates how many samples are needed from buyers' value distributions to find near-optimal posted prices, considering both independent and correlated buyer distributions, and welfare versus revenue maximization.  We obtain matching upper and lower bounds (up to logarithmic factors) on the sample complexity for all these settings.
\end{abstract}


\section{Introduction}

A fundamental problem in Economics is how to sell  a single indivisible item to a set of $n$  self-interested buyers.  
This problem is challenging because the value associated to the item is private knowledge of each buyer and thus can be strategically misreported.
 The mechanism designer usually has one of two  objectives: (a)\,\emph{welfare} maximization where we want to maximize the value of the winning buyer,  and (b)\,\emph{revenue} maximization where we want to maximize the price paid by the winning buyer.  The welfare maximization problem was resolved by Vickrey in 1961  \cite{Vickrey61} and  the revenue maximization problem was resolved (for independent distributions) by Myerson in 1981 \cite{Myerson81}, both of whom were awarded the Nobel prize in Economics for their contributions.   In either case the optimal mechanism is a truthful auction, in which buyers report their values and have no incentive to misreport.
 
 Although we know  optimal mechanisms for selling a single item,  these auctions are often difficult/impossible to implement in practice.  For instance,  the buyer does not know exactly how much they will pay if they win the item and these auctions do not work for online settings where the buyers arrive one-by-one  \cite{AusubelMilgrom06,HR-EC09}.   Thus,  the last two decades has seen a lot of progress on understanding the power of simple but near-optimal mechanisms.  In this regard,  \emph{posted pricing} {mechanisms} have been identified to be very successful (see books and surveys \cite{roughgarden2017twenty,HartlineBook,correa2019recent,lucier2017economic}).  Here,  the mechanism designer puts a (carefully chosen) price $\pi_i$ on the item and  then the $i$-th buyer takes the item if the item is not already sold and if they value it above $\pi_i$.  Posted pricing  mechanisms have several advantages: they are truthful since $\pi_i$ does not depend on the $i$-th buyer's value; the buyer knows exactly how much they pay on winning an item; and they work even for online settings so the buyers do not have to be simultaneously present.  Additionally,  in {the setting of Myerson's 1981 paper} of independent buyer valuations,  they are known to  give a  2-approximation to both the optimal welfare  (which is usually called Prophet Inequality) and the optimal revenue \cite{KrengelS77,KrengelS78,unit-demand-2007,CFPV-IPL19},  as long as the posted prices are set correctly. 
 This raises the main question of the current paper:
 \begin{quote}
\emph{How many samples are necessary from the value distributions of the buyers to find near-optimal posted prices for single item? Is there a difference between independent vs.\,correlated  distributions or between welfare vs.\,revenue maximization?
}
\end{quote}

\noindent Despite being a natural question,  the tight sample complexity bounds for single item posted pricing are unknown, both for independent/correlated  and welfare/revenue settings.

\subsection{Model}

Before stating our results, we formally describe our model.

\textbf{Posted pricing.}
There is single copy of an  item.  Buyers $i=1,\ldots,n$ arrive in order, each with a private value (willingness to pay) $V_i$ for that item.
Price $\pi_i$ is offered to buyer $i$, and the first buyer (if any) for whom $V_i\ge\pi_i$ end up buying the item, "winning" the auction.
The objective is to maximize either \textit{welfare}, which is the value of the item to the winner, or \textit{revenue}, which is the price paid by the winner.

\textbf{Distributions and policies.} We will assume that the   vector of values $\vV=(V_1,\ldots,V_n)$ is drawn from an unknown but fixed distribution $\vD$ over $[0,1]^n$.
If each $V_i$ is drawn \textit{independently} from some marginal distribution $D_i$, then $\vD$ is called a \textit{product distribution}, written as $\vD=D_1\times\cdots\times D_n$.
Meanwhile, \emph{posted pricing policies} are defined by a vector of prices $\pi=(\pi_1,\ldots,\pi_n)$ that must be fixed before the first buyer arrives, and possibly restricted to some subclass $\Pi$. 
The objective of policy $\pi$ under buyer values $\vV$ is denoted by $\pi(\vV)$, which equals $\val_{\argmin\{i:\pi_i\le V_i\}}$ and $\pi_{\argmin\{i:\pi_i\le V_i\}}$ for the welfare and revenue objectives respectively, understood to be 0 if the item is not sold.
We (abusively) let $\pi(\vD):=\E_{\vV\sim\vD}[\pi(\vV)]$ denote the expected objective of $\pi$, and let $\Pi(\vD):=\sup_{\pi\in\Pi} \pi(\vD)$ denote the best expected objective of a policy in $\Pi$ knowing $\vD$. 

Note that if we are given the product distribution on buyer values, the optimal policy   (for either objective) can be easily computed using dynamic programming \cite{CFPV-IPL19}. Consequently, for product distributions we consider the full policy class $\Pi=[0,1]^n$, and omit the dependence on $\Pi$.

\textbf{Learning problem.} A \textit{learning algorithm} takes as input samples $\vV$ that are drawn independently and identically distributed (IID) from the unknown distribution $\vD$ and an error parameter $\eps \in (0,1)$, and seeks to return a policy $\pi\in\Pi$ that satisfies $\pi(\vD)\ge \Pi(\vD)-\eps$, which is called an \textit{$\eps$-approximation}.
Given a failure probability $\delta\in (0,1)$, the \textit{sample complexity} is the minimum number of samples required for there to exist a learning algorithm that, under any distribution $\vD$, returns an $\eps$-additive approximation with probability at least $1-\delta$, noting that the randomness is over both the samples and any random bits in the algorithm.
The sample complexity and learning algorithm will depend on the problem variant, defined by the objective (welfare or revenue), any parameters of the policy class $\Pi$, and whether $\vD$ is restricted to be a product distribution  or not.  The sample complexity will also depend on the parameters $\eps,\delta>0$.



\subsection{Our Contributions to Posted Pricing for Product Distributions}

    
  The study of sample complexity for posted pricing goes back to at least the seminal work of Kleinberg and Leighton \cite{KL-FOCS03} who study revenue maximization for selling a single item to a single buyer (welfare maximization is trivial for single buyer since we just allocate the item by setting  $0$ price). For the general posted pricing problem with $n$ buyers, the  best known sample complexity bounds were due to  Guo et al. \cite{guo2021generalizing}, who showed that   
  $\tilde{O}(n/\eps^2)$ samples suffice from each buyer's distribution for both  welfare and  revenue maximization objectives.   Although there is a simple $\Omega(1/\eps^2)$ lower bound for both welfare and revenue maximization settings, prior to our work  it was unclear if polynomial dependency on the number of buyers $n$ is necessary.   

Our first result for product distribution shows that for welfare maximization using posted pricing (a.k.a. prophet inequality), the sample complexity is independent of the number of buyers $n$.

\begin{thm}[proved in \Cref{sec:prodPos}] \label{thm:prodPos}
For product distributions, the sample complexity of welfare maximization is $O(1/\eps^2 \cdot \log^2(1/\delta))$.
\end{thm}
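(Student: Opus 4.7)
My plan is to analyze an empirical dynamic programming algorithm mirroring the optimal solution with known distributions. For a product distribution $\vD = D_1 \times \cdots \times D_n$, the welfare-maximizing posted pricing is given by backward induction: define $U_{n+1} := 0$ and $U_i := \E_{V_i \sim D_i}[\max(V_i, U_{i+1})]$, so that the optimal threshold for buyer $i$ is $\pi_i^* := U_{i+1}$ and $\Pi(\vD) = U_1$. Given $T$ i.i.d.\ samples $\vV^{(1)}, \dots, \vV^{(T)} \sim \vD$, the algorithm forms the empirical analogues $\hat U_{n+1} := 0$ and $\hat U_i := \frac{1}{T}\sum_{t=1}^{T} \max(V_i^{(t)}, \hat U_{i+1})$, and outputs $\hat\pi_i := \hat U_{i+1}$.

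For the analysis I would decompose $U_1 - W(\hat\pi) = (U_1 - \hat U_1) + (\hat U_1 - W(\hat\pi))$, where $W(\hat\pi)$ denotes the expected welfare of $\hat\pi$ under the true $\vD$. The function $g_i(u) := \E[\max(V_i, u)]$ has derivative $F_i(u) \in [0,1]$ and so is $1$-Lipschitz. Crucially, under the product structure $\hat U_{i+1}$ depends only on $\{V_j^{(t)} : j > i\}$, which is independent of the $V_i^{(t)}$'s; hence conditional on $\hat U_{i+1}$, Hoeffding's inequality applies to $\frac{1}{T}\sum_t \max(V_i^{(t)}, \hat U_{i+1})$ and yields a per-level recursion
\[
|\hat U_i - U_i| \leq F_i(\xi_i) \cdot |\hat U_{i+1} - U_{i+1}| + O\!\left(\sqrt{\log(1/\delta)/T}\right).
\]
A parallel argument bounds the generalization term, viewing $\hat\pi$ as the DP optimum on the empirical marginals.

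The main obstacle will be preventing these recursions from accumulating an $n$-factor: naive unrolling produces an error proportional to $\sum_{k=0}^{n-1} \bP[\text{item reaches buyer } k{+}1 \text{ under OPT}]$, i.e., the expected sale position, which is $\Theta(n)$ for e.g.\ IID Uniform. To overcome this I would bound $U_1 - W(\hat\pi)$ directly using the \emph{second-order} insensitivity of welfare at the optimum: since $\pi_i^* = U_{i+1}$ is the first-order stationary point of $W$ viewed as a function of $\pi_i$, a perturbation $\delta_i$ produces welfare loss of order $\delta_i^2$ rather than $\delta_i$. Aggregating these quadratic losses by a potential argument weighted by the arrival probability $p_i$ and the marginal density at $\pi_i^*$ should yield an $n$-free total, since the winning probabilities $p_i(1-F_i(\pi_i^*))$ sum to at most one. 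The $\log^2(1/\delta)$ factor in the final sample complexity would arise by combining the conditional Hoeffding bound at each level with a standard confidence-boosting step (median-of-means style) to eliminate any residual dependence on $n$ in the failure probability.
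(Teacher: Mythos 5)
Your algorithm (the empirical dynamic program on the product empirical distribution) is the same as the paper's, and you correctly diagnose the central obstacle: unrolling the Lipschitz recursion charges the per-level error $O(\sqrt{\log(1/\delta)/T})$ once for each buyer the item is likely to reach, which is $\Theta(n)$ in general. But the fix you propose has a genuine gap. The second-order insensitivity claim---that perturbing $\pi_i$ by $\delta_i$ from the stationary point costs only $O(\delta_i^2)$---requires $V_i$ to have a bounded density near the threshold: the exact local loss at buyer $i$ is at most $|\delta_i|$ times the probability that $V_i$ lands in the length-$|\delta_i|$ interval between the two thresholds, and for a distribution with a constant-mass atom inside that interval this is $\Theta(|\delta_i|)$, not $\Theta(\delta_i^2)$. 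The model allows arbitrary distributions on $[0,1]$, so the quadratic bound is unavailable. Even granting densities, the aggregation is not justified: the stage-$i$ loss is weighted by $p_i f_i(\pi_i^*)$, and the fact that $\sum_i p_i(1-F_i(\pi_i^*))\le 1$ controls tail masses, not densities, so $\sum_i p_i f_i(\pi_i^*)$ can be arbitrarily large. Two further issues: the stationary point relevant to buyer $i$ is the learned policy's own continuation value, not $U_{i+1}$, so $\delta_i$ is not the quantity your concentration step bounds; and the recursion alone (with $F_i\equiv 1$ possible) only gives $|\hat U_i - U_i|\lesssim (n-i)\sqrt{\log(1/\delta)/T}$, so the individual perturbations are not small without a further argument.

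The missing idea is to exploit the \emph{signs} of the per-level errors rather than their magnitudes. Defining $\eta_i = \hat U_i - \E[\max(V_i,\hat U_{i+1})]$, the paper proves a pointwise inequality $r_1 \ge r^*_1 - 2\max_j\bigl|\sum_{i\le j}\eta_i\bigr|$ via two telescoping lemmas, so that overestimation at one buyer cancels underestimation at another; the partial sums form a martingale (by the independence you already noted), and Freedman's inequality controls the maximal partial sum using that the total conditional variance is at most $\frac1T\sum_i \E[\hat U_i - \hat U_{i+1}\mid \hat U_{i+1},\ldots] $, which telescopes to roughly $\frac1T\hat U_1 \le \frac1T$. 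That telescoping-to-one is what replaces your $\sum_i p_i = \Theta(n)$, works for arbitrary (atomic) distributions, and is also the source of the $\log^2(1/\delta)$. As written, your proposal would need either a bounded-density assumption or this cancellation argument to remove the $n$-factor.
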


\textbf{Proof sketch.} We start by constructing the product empirical distribution using the samples, where for each $i$ we take the uniform distribution on the $\tilde{O}(1/\eps^2)$ samples and then take the product distribution for different $i$. Our main result is that the  optimal policy (dynamic programming solution) on this product empirical distribution is an $\eps$-approximation with high probability. Although one could use standard concentration bounds to bound the gap between the learned and optimal thresholds, such arguments lose a $\text{poly}(n)$ factor. This is because the difference between successive thresholds is bounded by $1$, so na\"ively the total variance of $n$ thresholds could be $\Omega(n)$.  Our main idea is to instead {study a martingale that adds up the errors made in  the dynamic programming solution. This way, too high values for one buyer and too low values for another buyer balance each other. On this martingale, we} use Freedman's inequality (which is a martingale variant of Bernstein's inequality), which allows us to bound the total variance in terms of conditional variances. 
A careful application of another concentration bound allows us to bound these conditional variances in terms of the change in the optimal value, whose sum is always at most $1$. This latter concentration bound is what causes the additional factor of $\log(1/\delta)$ in the sample complexity. \\

Our second result for product distribution shows a separation between welfare and revenue maximization using posted pricing, by proving that the $\tilde{O}(n/\eps^2)$ sample-complexity result of \cite{guo2021generalizing} for revenue maximization is tight up to log factors. 

\begin{thm}[proved in \Cref{sec:prodNeg}] \label{thm:prodNeg}
For revenue maximization on product distributions,
any learning algorithm requires $\Omega(\frac n{\eps^2})$ samples to return an $\eps$-additive approximation with probability greater than 6/7.
\end{thm}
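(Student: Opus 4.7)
The plan is to invoke Assouad's method with a two-level product construction that is statistically hard to distinguish per sample but makes per-buyer pricing decisions matter for revenue. I introduce $2^n$ product distributions $\vD^b$ indexed by $b \in \{-1,+1\}^n$: each buyer $i$ has marginal supported on $\{0,1/2,1\}$ with
\[
\Pr{V_i=0} = 1 - \tfrac{1}{n}, \quad \Pr{V_i=1} = \tfrac{1/2 + \eps b_i}{n}, \quad \Pr{V_i=1/2} = \tfrac{1/2 - \eps b_i}{n}.
\]
This activates each buyer with probability only $1/n$, so their ex-ante per-buyer revenue contribution is $\Theta(1/n)$; the isolated optimal posted price is $1$ when $b_i=+1$ and $1/2$ when $b_i=-1$, with the swap costing exactly $\eps/n$ in ex-ante revenue.

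The first step of the argument is to show that these per-buyer losses aggregate into a sequential revenue loss of $\Omega(\eps/n)\cdot d_H(\hat b(\pi),b)$, where $\hat b(\pi)_i$ reads off whether $\pi_i$ equals $1$ or $1/2$ (any other price strictly loses more). Because $\sum_i \Pr{V_i>0} = 1$, competition is Poisson-like: the event that no buyer before $i$ claims the item has probability at least $(1-1/n)^{i-1} \ge 1/e$. By direct inclusion--exclusion, the sequential revenue and ex-ante revenue agree up to a constant factor, and the side-effects of swapping one buyer's bit on downstream buyers are $O(1/n^2)$, negligible compared to the direct $\eps/n$ loss in the parameter regime $\eps \gg 1/n$ where the theorem is nontrivial. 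Thus any $\eps$-approximate policy must satisfy $d_H(\hat b(\pi),b) \le cn$ for some $c<1$.

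Next I compute the per-sample squared Hellinger distance $H^2(D_i^+,D_i^-) = O(\eps^2/n)$ (the zero atom contributes nothing, and each of the two mass atoms contributes $O(\eps^2/n)$ via the expansion $(\sqrt{a}-\sqrt{b})^2 = (a+b) - 2\sqrt{ab}$ around $a=b=1/(2n)$). Tensorization gives $H^2((D_i^+)^m,(D_i^-)^m) \le O(m\eps^2/n)$, so the total variation between the ``$b_i=+1$'' and ``$b_i=-1$'' mixtures is bounded away from $1$ whenever $m = o(n/\eps^2)$. Assouad's inequality then yields $\E[d_H(\hat b(\pi),b)] = \Omega(n)$ for any estimator built from $m$ samples, hence $\E[\text{revenue loss}] = \Omega(\eps)$.

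Finally, to convert the expectation lower bound into the stated constant-probability failure guarantee, I would rescale the perturbation in the construction by a sufficiently large constant $c'$ so that achieving an $\eps$-additive approximation forces $d_H \le n/c'$, while Assouad still gives $\E[d_H] \ge n/c''$ for a constant $c''<c'$. A reverse-Markov estimate on $d_H \in [0,n]$ then bounds $\Pr{d_H > n/c'}$ below by a constant exceeding $1/7$. The main obstacle I anticipate is the sequential-revenue analysis in the second paragraph---carefully verifying that the downstream side-effects of swapping individual prices do not cancel out the direct per-buyer losses, even when many bits are wrong simultaneously---which I plan to control using the $1/n$ activation scaling and the Poisson-like structure of buyer competition.
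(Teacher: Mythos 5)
There is a genuine gap, and it sits exactly where your construction differs from what is needed: you choose the per-buyer optimal price by comparing $1\cdot\Pr{V_i=1}$ against $\tfrac12\Pr{V_i\ge\tfrac12}$ in isolation, but the sequentially optimal price must be compared against the continuation value $r_{i+1}$ of the dynamic program. In your construction $r_{i+1}$ is a constant bounded away from $0$ (posting price $1$ everywhere already earns roughly $1-e^{-1/2}\approx 0.39$, so $r_{i+1}=\Theta(1)$ for all but the last few buyers). The margin between posting $1$ and posting $\tfrac12$ at buyer $i$ is
\[
\Bigl(p_1+r_{i+1}(1-p_1)\Bigr)-\Bigl(\tfrac12 q+r_{i+1}(1-q)\Bigr)=\frac{\eps b_i}{n}+r_{i+1}\cdot\frac{1/2-\eps b_i}{n},
\qquad p_1=\tfrac{1/2+\eps b_i}{n},\ q=\tfrac1n,
\]
and the second term, of order $r_{i+1}/(2n)=\Theta(1/n)$, swamps the signal $\eps b_i/n$ whenever $\eps$ is a small constant. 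So price $1$ is optimal for every buyer with $r_{i+1}>2\eps$ regardless of $b_i$; the hidden bits only affect the decision for the last $O(\eps n)$ buyers, and getting all of those wrong costs at most $O(\eps n)\cdot O(\eps/n)=O(\eps^2)\ll\eps$. A learner that ignores the samples and posts price $1$ everywhere is already an $O(\eps^2)$-approximation, so your instance does not force $\Omega(n/\eps^2)$ samples. The paper's construction avoids exactly this by tilting the supports with $i$: the two nonzero atoms for buyer $i$ are $\tfrac12+\tfrac{n-i}{4n}$ and $\tfrac14+\tfrac{n-i}{4n}$, arranged so that $r^\ast_{i+1}=\tfrac{n-i}{4n}$ and the two candidate prices are exactly tied above the continuation value absent the $\pm\eps/n$ perturbation; only then does the perturbation genuinely control the optimal decision at every buyer.

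Two smaller points. First, your claimed linear aggregation of losses, revenue loss $=\Omega(\eps/n)\cdot d_H$, is asserted rather than proved; mistakes interact through the continuation values, and the paper only establishes the weaker bound $\Omega(\eps M^2/n^2)$ for $M$ mistakes, which still suffices since Assouad forces $M=\Omega(n)$. Second, your Hellinger computation, tensorization, and the reverse-Markov step to get a constant failure probability are essentially identical to the paper's and are fine; the construction is the part that needs to be repaired before the rest of the argument applies.
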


\textbf{Proof sketch.}
We construct for each buyer $i$ two possible value distributions that add the same amount (roughly $\frac1n$) to the value-to-go of the optimal dynamic program. However, these distributions have different optimal prices, and making a mistake (choosing an incorrect price) in isolation loses roughly $\frac\eps n$ value.
Although these mistakes accumulate in a non-linear fashion, we show that making $M$ mistakes must lose in total $\Omega(\frac{\eps M^2}{n^2})$.
Finally, these value distributions have probabilities on the scale of $\frac{1\pm\eps}n$ with the same supports (essentially, only $1/n$ of samples provide information), which means that $\Omega(\frac n{\eps^2})$ samples are needed to avoid making a constant fraction of mistakes.


\subsection{Our Contributions to Posted Pricing for Correlated Distributions} 

Independence among buyer valuations can be a  strong modeling assumption for many applications. Although for arbitrary correlated distributions the optimal policy is not learnable, one could hope to learn the best policy in the class of all posted pricing policies. 
A recent work of Balcan et al.  \cite{balcan21} can be applied to this setting to show that $\tilde{O}(n/\eps^2)$ samples are sufficient to learn $\eps$-optimal posted pricing. As discussed in \Cref{thm:prodNeg}, this linear in $n$ dependency is necessary for revenue maximization, even for product distributions. Our first observation is that for correlated distributions, we need to lose this factor even for welfare maximization.

\begin{thm}[corollary of \Cref{thm:corrNeg}]
For welfare maximization with correlated buyers,
any learning algorithm requires $\Omega(\frac n{\eps^2})$ samples to return an $\eps$-additive approximation with constant probability.
\end{thm}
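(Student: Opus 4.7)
The plan is to obtain this as an immediate corollary of Theorem~\ref{thm:corrNeg} (which establishes an $\Omega(n/\eps^2)$ lower bound for revenue under correlated distributions), by exhibiting a family of hard correlated instances on which the welfare and revenue of every non-dominated posted pricing policy coincide. On such instances, any learning algorithm that produces an $\eps$-approximate welfare policy automatically produces an $\eps$-approximate revenue policy, so the revenue lower bound transfers verbatim.

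Concretely, I would arrange the hard instances so that each marginal $V_i$ is supported on the two-point set $\{0, p_i\}$ for some buyer-specific $p_i \in (0,1]$. Under such marginals the only non-dominated prices for buyer $i$ are $\pi_i = p_i$ (sell whenever $V_i = p_i$) and $\pi_i > p_i$ (never sell): any $\pi_i \in (0, p_i)$ gives the same allocation as $\pi_i = p_i$ but strictly less revenue, and any $\pi_i \in (p_i, 1]$ is equivalent to $+\infty$. Restricting the learner to $\pi_i \in \{p_i, +\infty\}$ is therefore without loss, and on this restricted class welfare and revenue agree exactly, since whenever buyer $i^*$ wins the item both quantities equal $p_{i^*}$. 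Thus an $\eps$-additive welfare approximation is simultaneously an $\eps$-additive revenue approximation, and the $\Omega(n/\eps^2)$ bound from Theorem~\ref{thm:corrNeg} applies.

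The main obstacle is confirming that the construction used to prove Theorem~\ref{thm:corrNeg} is compatible with two-point marginals while still producing the $\Omega(n/\eps^2)$ rate. If the construction there already has binary marginals (which is the natural form for lower bounds of this kind), there is nothing more to do. Otherwise I would discretize each $V_i$ to a two-point distribution at $\{0, p_i\}$ whose Bernoulli parameter is tuned so that (i) distinguishing the two candidate parameters for buyer $i$ still requires $\Omega(1/\eps^2)$ samples, via the same per-coordinate Le Cam or Fano argument used in Theorem~\ref{thm:corrNeg}, and (ii) the cost of mislabeling a constant fraction of the $n$ buyers still aggregates to $\Omega(\eps)$. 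Both conditions are standard ingredients of this style of information-theoretic lower bound, and together they yield the desired $\Omega(n/\eps^2)$ welfare lower bound.
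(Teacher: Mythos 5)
There is a gap here, and it starts with a misreading of the source theorem: \Cref{thm:corrNeg} as stated in the paper already covers \emph{welfare} maximization, so the paper obtains this corollary simply by taking $\cS=\{2,\ldots,n\}$ (so that $\Pi_\cS=[0,1]^n$ and $1+|\cS|=n$); all of the substance lives in the welfare construction inside the proof of \Cref{thm:corrNeg}. You instead treat \Cref{thm:corrNeg} as a revenue-only result and try to transfer it to welfare via instances on which the two objectives coincide. That transfer does not go through with the paper's actual revenue-hard instances: there, the relevant buyer's value is supported on $\{1/2,1\}$ (with probabilities $1/2\pm\eps$), not on $\{0,p_i\}$, so welfare and revenue diverge, and worse, those instances are \emph{trivially easy} for welfare --- posting price $1/2$ at every decision point achieves the optimal welfare $\E[V_{\ti}]$ regardless of the unknown High/Low labels, so no learning is needed. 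Your equivalence argument on $\{0,p_i\}$-supported marginals is fine as far as it goes (modulo a rounding step mapping $\pi_i\in(0,p_i]$ to $p_i$ so that a welfare-good policy becomes revenue-good), but it only helps if you can exhibit a family of such instances that is simultaneously hard; you have not done so.

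Your fallback --- ``discretize to two-point marginals and tune the Bernoulli parameters'' --- is where the entire content of the proof lies, and it is not carried out. Note that a single buyer with support $\{0,p_i\}$ presents no welfare-relevant decision in isolation; hardness for welfare requires the accept/reject decision at buyer $i$ to hinge on the unknown distribution of a \emph{later} buyer. The paper achieves this by pairing each decision point $i$ (deterministic value $1/2$) with a successor $i+1$ whose value is $1$ with probability $1/2\pm\eps$, and by correlating the trajectories so that a uniformly random one of the $\Theta(n)$ pairs is active on each sample, which dilutes the per-pair information by a factor of $n$ and yields $\Omega(n/\eps^2)$. Your sketch identifies neither the pairing structure nor the dilution mechanism, so as written it does not constitute a proof.
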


Given this  lower bound, a natural next question is to consider the subclass of posted pricing policies where the algorithm is only allowed to change its threshold at a small number of given locations. Can we now remove the linear in $n$ dependency from sample complexity? The motivation to study this class comes from  posted pricing applications where it is not possible for the algorithm designer to update the prices at each time step, e.g., due to business constraints.   

Formally, for $\cS\subseteq\{2,\ldots,n\}$, we say that posted pricing policy $\pi$ \textit{respects change-points} $\cS$ if $\pi_i$ can differ from $\pi_{i-1}$ only when $i\in \cS$. We let $\Pi_\cS$ denote the class of all policies that respect change-points $\cS$, noting that policies in $\Pi(\emptyset)$ post a static price for all buyers, and $\Pi(\{2,\ldots,n\})=[0,1]^n$.  
Our following result shows that one can obtain sample complexity that is independent of $n$,  depending only on the size of $\cS$.


\begin{thm}[proved in \Cref{sec:corrPos}] \label{thm:corrPos}
For correlated distributions, the sample complexity of welfare or revenue maximization is $O\big(\frac{(1+|\cS|)\log(1+|\cS|) + \log(1/\delta)}{\eps^2}\big)$ when the policy is restricted to $\Pi_\cS$, for any $\cS\subseteq\{2,\ldots,n\}$.
\end{thm}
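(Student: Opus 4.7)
The plan is to apply a classical uniform convergence result based on bounding the pseudo-dimension of the value function class $\mathcal{F}_\cS := \{\vV \mapsto \pi(\vV) : \pi \in \Pi_\cS\}$. A policy in $\Pi_\cS$ is fully described by $k := 1+|\cS|$ prices $(p_1,\ldots,p_k) \in [0,1]^k$, one per block $B_1,\ldots,B_k$ of consecutive buyers delimited by the change-points in $\cS$, so the class is parameterized by $k$ reals. Once I establish $\pdim(\mathcal{F}_\cS) = O(k\log k)$, the standard ERM guarantee for real-valued classes (Pollard-style) immediately gives the claimed sample complexity $O((k \log k + \log(1/\delta))/\eps^2)$.

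I would first recall that $\pdim(\mathcal{F}_\cS)$ equals the VC dimension of the subgraph class $\{(\vV, t) \mapsto \one[\pi(\vV) \geq t] : \pi \in \Pi_\cS\}$ of boolean functions of input $(\vV, t)$. The main technical step is to bound this VC dimension by an arrangement argument in the parameter space $[0,1]^k$: for any fixed set of $d$ inputs $\{(\vV^{(s)}, t^{(s)})\}_{s=1}^d$, I count the axis-aligned hyperplanes in $\pi$-space across which at least one of the indicators $\one[\pi(\vV^{(s)}) \geq t^{(s)}]$ can change value as $\pi$ varies.

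For the revenue objective, $\pi(\vV) = p_{j^*}$ with $j^* = \min\{j : W_j \geq p_j\}$ and $W_j := \max_{i \in B_j} V_i$, so on each input $(\vV, t)$ the indicator $\one[p_{j^*} \geq t]$ is piecewise constant with breakpoints only on the hyperplanes $\{p_j = W_j\}$ and $\{p_j = t\}$---at most two per axis per input. For the welfare objective, $\pi(\vV) = V_{i^*}$ could \emph{a priori} take $n+1$ distinct values, but a short case analysis shows that $\one[V_{i^*} \geq t]$ on a fixed $(\vV, t)$ depends on $\pi$ only through the per-coordinate comparisons $p_j = W_j$, $p_j = t$, and $p_j = M_j^{<}(t)$, where $M_j^{<}(t)$ denotes the maximum value among buyers in $B_j$ strictly before the first one there with value $\geq t$---at most three per axis per input. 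With $O(1)$ hyperplanes per axis per input, $d$ inputs induce an axis-aligned arrangement with at most $(O(d))^k$ cells, on each of which the full vector of $d$ indicators is constant; the shattering inequality $2^d \leq (O(d))^k$ then yields $d = O(k \log k)$.

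The main obstacle is the welfare case: a naive argument would allow up to $n_j$ breakpoints on axis $p_j$ per input (one for each value $V_i$ with $i \in B_j$), introducing an unwanted $\log n$ factor into the sample complexity. The key to eliminating this factor is the reduction above, which shows that the subgraph indicator $\one[V_{i^*} \geq t]$ depends on $\pi$ only through the two per-block summary statistics $W_j$ and $M_j^{<}(t)$---rather than every individual buyer value---keeping the effective hyperplane count per axis at $O(1)$, independent of $n$. Verifying the equivalence $\one[V_{i^*}\geq t] = \sum_{j} \one[W_j \geq \max(p_j,t)]\cdot \one[p_j > M_j^{<}(t)] \cdot \prod_{j'<j}\one[W_{j'}<p_{j'}]$ (which reduces the ``first match is above $t$'' check to a test against these summary statistics) is the one place where I expect the bookkeeping to require care.
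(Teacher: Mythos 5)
Your proposal is correct and follows essentially the same route as the paper: reduce to bounding the pseudo-dimension of the $k$-parameter price class, and show that for each fixed input $(\xv,t)$ the subgraph indicator depends on $\rhov$ only through $O(1)$ axis-aligned comparisons per block --- your summary statistics $W_j$ and $M_j^{<}(t)$ are exactly the paper's $u_j$ and $l_j$ in its key lemma on the structure of the good sets, including the crucial welfare-case observation that keeps the breakpoint count per axis independent of $n$. The only difference is the final step, where the paper invokes the piecewise-decomposability theorem of Balcan et al.\ as a black box while you carry out the equivalent axis-aligned cell-counting bound $2^d \le (O(d))^k$ directly, a valid and self-contained substitute.
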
 

\textbf{Proof sketch.}  By existing results in learning theory, it  suffices to bound the pseudo-dimension of $\Pi_\cS$. This will boil down to understanding the structure of "good sets", which are sets of the form $\{\pi \in \Pi_\cS: \pi(\xv) \geq z\}$, for some input $\xv$ and some target $z$. We will show that for a natural parameterization of $\Pi_\cS$, any good set can be expressed as the union and intersection of $O(|\cS| + 1)$  halfspaces, which implies the pseudo-dimension of $\Pi_\cS$ is $O((|\cS|+1)\log (|\cS|+ 1))$ by a result of \cite{balcan21}. This bound on the pseudo-dimension translates to the above sample complexity bound. \\

The learning algorithm which achieves the sample complexity bound in \Cref{thm:corrPos} is simply sample average approximation (SAA), which returns the policy in $\Pi_\cS$ with the highest objective value averaged over the samples. SAA can be computed in time $O(Tn(Tn)^{1+|\cS|})$.  This is because there are $1+|\cS|$ prices to decide, each of which can take $Tn$ possible values (one for each realized value in the $T$ samples of length $n$), and evaluating each combination of prices over each of the $T$ samples takes runtime linear in $n$. We leave as an open question whether there is a more efficient algorithm.

Finally, we complement our sample complexity upper bounds for correlated distributions by giving matching lower bounds (up to polylogs).

\begin{thm}[proved in \Cref{sec:corrNeg}] \label{thm:corrNeg}
For welfare or revenue maximization on correlated distributions, a learning algorithm requires $\Omega(\frac{1+|\cS|}{\eps^2})$ samples to return an $\eps$-additive approximation with constant probability, when restricted to the policy class $\Pi_\cS$ for any $\cS\subseteq\{2,\ldots,n\}$.
\end{thm}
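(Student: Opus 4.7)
The plan is to reduce to a multi-coin distinguishing problem, constructing a family of correlated distributions $\{\vD_\theta\}_{\theta\in\{\pm 1\}^k}$ (where $k=|\cS|+1$) such that each sample reveals a single Bernoulli observation about one of the $\theta_j$'s. Letting the Bernoulli bias gap be $\eps'=\Theta(\eps)$, each $\theta_j$ is probed only in a $\Theta(1/k)$ fraction of samples, so distinguishing will require $\Omega(k/\eps^2)$ total samples. Simultaneously, any $\eps$-optimal policy in $\Pi_\cS$ will need to correctly decide $\theta_j$ for a constant fraction of blocks, since the per-wrong-block objective loss scales as $\Theta(\eps/k)$.

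For revenue I would use one representative buyer per block: draw $J\in[k]$ uniformly, set $V_{i_J}\in\{1/2,1\}$ Bernoulli$(1/2+\theta_J\eps')$, and $V_i=0$ for all other buyers. The revenue-optimal price in block $J$ is $1$ when $\theta_J=+1$ (revenue $1/2+\eps'$) and $1/2$ when $\theta_J=-1$ (revenue $1/2$), so a wrong choice loses exactly $\eps'$ conditional on $J$ being active. For welfare this single-representative scheme fails because always accepting is optimal regardless of $\theta_J$; instead I employ a paired-block scheme. Draw $J\in[k-1]$ uniformly, let $r_j$ be the first buyer in block $j$, and set $V_{r_J}\in\{1/2,1\}$ Bernoulli$(1/2)$ and $V_{r_{J+1}}\in\{0,1\}$ Bernoulli$(1/2+\theta_J\eps')$, with $V_i=0$ otherwise. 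The welfare-optimal choice in block $J$ is to \emph{wait} (set $\pi_J\in(1/2,1]$, rejecting $r_J$'s potential low value in order to reach $r_{J+1}$) iff $\theta_J=+1$; a wrong choice loses $\eps'/2$ of welfare conditional on $J$ active. The $k=1$ (static-pricing) case for welfare is handled separately by two buyers with IID values in $\{1/2,1\}$ Bernoulli$(1/2+\theta\eps)$ (other buyers zero), for which the welfare-optimal static threshold genuinely depends on the sign of $\theta$, giving $\Omega(1/\eps^2)$.

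The formal lower bound is a standard Assouad-style argument. For any $j$, $\vD_\theta$ and its single-coordinate flip $\vD_{\theta^{(j)}}$ agree except on the $\Theta(1/k)$ fraction of samples where $j$ is probed (where they differ only by a Bernoulli bias gap of $2\eps'$), so the per-sample KL divergence between them is $\Theta(\eps'^2/k)$; over $T$ samples the total KL is $\Theta(T\eps'^2/k)$, which is $o(1)$ whenever $T=o(k/\eps^2)$. Pinsker's inequality then implies that no estimator can determine $\theta_j$ with accuracy above $1/2+o(1)$; aggregating via Assouad's lemma (equivalently, a Chernoff bound on the conditionally independent per-block tests) yields $\Omega(k)$ coordinates guessed wrong with constant probability, hence $\Omega(\eps)$ objective loss. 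The main technical obstacle is the welfare case: one must design the construction so that the welfare-optimal per-block decision genuinely depends on $\theta_j$ at an $\Omega(\eps)$-scale, which the paired-block scheme accomplishes by creating a nontrivial wait-vs.-accept tradeoff whose resolution hinges on whether the neighboring-block Bernoulli mean sits above or below $1/2$.
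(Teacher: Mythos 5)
Your proposal is correct and follows essentially the same route as the paper: a uniformly random ``active'' decision point per sample dilutes the information by a factor of $\Theta(1+|\cS|)$, a Bernoulli bias gap of $\Theta(\eps)$ forces $\Omega(1/\eps^2)$ relevant observations per point, and the welfare case is handled by an accept-now-at-$1/2$ versus wait-for-a-$\{0,1\}$-buyer gadget whose resolution depends on whether the continuation mean is above or below $1/2$. The only differences are cosmetic: the paper pairs each decision point $i\in\cS'$ with buyer $i+1$ (thinning $\cS'$ to avoid collisions and losing a factor $2$) where you pair adjacent blocks $J$ and $J+1$ (losing one coin and treating $k=1$ separately), and the paper's total-variation/Hellinger bookkeeping is an informal version of your Assouad aggregation.
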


\textbf{Proof sketch.} In the construction for \Cref{thm:corrNeg}, on each trajectory exactly one of the $1+|\cS|$ decision points (randomly selected) will be relevant, essentially diluting the samples by a factor of $1+|\cS|$ and leading to a lower bound of $\Omega(\frac{1+|\cS|}{\eps^2})$.

\subsection{Further Related Work}

In 2007,   Hajiaghayi et al. \cite{HKS-AAAI07} discovered  connections between auction design and posted pricing via prophet inequalities.  Since then, there is a long line  of work on  understanding the power of posted pricing for selling multiple items to combinatorial buyers  \cite{unit-demand-2010,FeldmanGL15,DuttingKL20,AKS-SODA21,CorreaC23}. 
For single item revenue maximization with known distributions,  Correa et al. \cite{CFPV-IPL19} showed the equivalence of welfare and revenue maximization objectives for single item posted pricing.  

For background on sample complexity, we suggest Wainwright's excellent textbook \cite{Wainwright2019}.
Sample complexity of auction design has been greatly studied; e.g., see  \cite{KL-FOCS03,CR-STOC14,MR-COLT16}. We refer the readers to Guo et al. \cite{GHZ-STOC19}, who recently resolved single item revenue maximization in the offline setting, for an overview of the literature. In \cite{guo2021generalizing}, Guo et al. generalized their techniques for revenue maximization over product distributions  to all ``strongly monotone'' problems, which includes posted pricing for welfare and revenue maximization. 

Recently, there is also a lot interest in learning auctions   in limited feedback models like bandit and pricing queries \cite{GKSW-SODA24,SW-arXiv23,LSTW-SODA23}.
We should note that sample complexity of optimal stopping (equivalent to our welfare maximization problem) has been previously studied in \cite{guan2022randomized}, who analyze linear stopping rules in a contextual setting.  Our application of techniques from \cite{balcan21} to online algorithms over correlated sequences is also similar in spirit to some results from \cite{xie2024vc}, who study a different application of inventory optimization.

Another related but tangential line of work focuses on prophet inequalities with samples \cite{AKW-SODA14,RWW-ITCS20,CDFS-MOR22,CaramanisDFFLLP-SODA22,DKLRS-arXiv24}. The key distinction in these works is that their benchmark is the expected hindsight optimum, rather than the optimal online policy. Notably, any online algorithm incurs at least a factor of 2 loss compared to the hindsight optimum, even in the case of single-item prophet inequalities. As a result, this line of research aims to achieve $O(1)$-approximation guarantees, rather than the sublinear regret guarantees pursued in the current paper. Furthermore, their techniques differ significantly, as they often assume unbounded distributions. 
Finally, there is also work that explores the (non-)robustness of algorithms for the prophet inequality problem to inaccuracies in the distributions \cite{DK-EC19} and to dependencies in distributions \cite{ISW-EC20,LPS-EC24}.

\newcommand{\errorstep}{\eta}

\section{Product Distributions}
In this section we first prove our improved upper bound on the sample complexity of welfare for product distributions, and then prove a new lower bounds on the sample complexity of revenue for product distributions.

\subsection{{Positive Result for Welfare: Proof of \Cref{thm:prodPos}}} \label{sec:prodPos}

The reward of the optimal policy is given by the following backward induction: $r^\ast_{n+1} = 0$ and $r^\ast_i = \Ex{\max\{r^\ast_{i+1}, \val_i\}} = \Ex{(\val_i - r^\ast_{i+1})^+} + r^\ast_{i+1}$. It sets $\pi_i = r^\ast_{i+1}$.

When we do not know the distributions but only have $T$ samples from each of them, we can consider the optimal policy on the product empirical distribution, which corresponds to replacing the expectations in the above definitions by the empirical average. That is, we will analyze the policy that sets $\pi_i = \hat{r}_{i+1}$, where $\hat{r}_i$ is defined recursively by $\hat{r}_{n+1} = 0$, $\hat{r}_i = \frac{1}{T} \sum_{t=1}^T (\val^{(t)}_i - \hat{r}_{i+1})^+ + \hat{r}_{i+1}$.

Let $r_i$ be the expected reward of this policy when starting with the $i$-th arrival. In order to prove the theorem, it is sufficient to show that with probability at least $1 - \delta$, we have $r_1 \geq r^\ast_1 - \epsilon$ if $T \geq  ({5 \ln(2e/\delta)}/{\epsilon})^2$ for any choices of $\epsilon, \delta \in (0,1)$.

Let us define $\errorstep_i = \hat{r}_i - \Ex{\max\{\hat{r}_{i+1}, \val_i\}}$ or equivalently as 
\[ \textstyle
\errorstep_i = \frac{1}{T} \sum_{t=1}^T (\val_i^{(t)} - \hat{r}_{i+1})^+ - \Ex{(\val_i - \hat{r}_{i+1})^+} \enspace.
\]
{That is, $\errorstep_i$ denotes the error introduced by using the empirical distribution for buyer $i$ instead of the actual one. Note that $\errorstep_i$ can both be positive and negative.}

{The two key steps of our proof are as follows. We first show that}
\begin{align} \label{eq:r1VsrStar}
 \textstyle
r_1 \geq r^\ast_1 - 2 \max_{j \geq 1} \left\lvert \sum_{i=1}^j \errorstep_i \right\rvert \enspace.
\end{align}
This inequality holds point-wise, that is for any samples drawn. Then, we show that for $T \geq  ({5 \ln(2e/\delta)}/{\epsilon})^2$ samples, we have $\max_{j \geq 1} \left\lvert \sum_{i=1}^j \errorstep_i \right\rvert \leq \epsilon$ with probability at least $1-\delta$.

In order to show \eqref{eq:r1VsrStar}, we first lower bound $r_1$ in terms of $\hat{r}_1$.
\begin{lemma}
    $r_1 \geq \hat{r}_1 - \max_{j\in\{0,\ldots,n\}} \sum_{i=1}^j \errorstep_i$.
\end{lemma}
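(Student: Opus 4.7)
I would introduce the gap $\Delta_i := r_i - \hat{r}_i$, with the goal of showing $-\Delta_1 \le \max\{0,\max_{j\ge 1}\sum_{i=1}^j \errorstep_i\}$. The first task is to derive a clean one-step recursion for $\Delta_i$. Since the learned policy uses threshold $\hat{r}_{i+1}$ at step $i$, I would write
$$r_i = \E\bigl[(V_i-\hat{r}_{i+1})^+\bigr] + \hat{r}_{i+1}\bP[V_i\ge\hat{r}_{i+1}] + r_{i+1}\bP[V_i<\hat{r}_{i+1}],$$
and combine this with the definition of $\hat{r}_i$, which gives $\E[(V_i-\hat{r}_{i+1})^+] = \hat{r}_i - \hat{r}_{i+1} - \errorstep_i$. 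Setting $p_i:=\bP[V_i<\hat{r}_{i+1}] \in [0,1]$ and simplifying, this yields the recursion
$$\Delta_i \;=\; -\errorstep_i + p_i\,\Delta_{i+1}, \qquad \Delta_{n+1}=0.$$

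Unrolling the recursion gives $-\Delta_1 = \sum_{i=1}^n q_i\,\errorstep_i$, where $q_i := \prod_{k=1}^{i-1} p_k$ is the probability that the item is still available when buyer $i$ arrives. Crucially, the weights $q_i$ form a non-increasing sequence in $[0,1]$ with $q_1=1$. The next step is an Abel summation: writing $E_j := \sum_{i=1}^j \errorstep_i$ (so $E_0=0$ and $\errorstep_i = E_i - E_{i-1}$), I would rearrange
$$\sum_{i=1}^n q_i\,\errorstep_i \;=\; q_n E_n + \sum_{i=1}^{n-1}(q_i - q_{i+1})\,E_i.$$
Because $q_n \ge 0$ and each $q_i - q_{i+1}\ge 0$, replacing each $E_i$ (including $E_n$) by $\max\{0,\max_j E_j\}$ yields the upper bound
$$-\Delta_1 \;\le\; \Bigl(q_n + \sum_{i=1}^{n-1}(q_i-q_{i+1})\Bigr)\max\{0,\max_{j\ge 1} E_j\} \;=\; q_1 \cdot \max\{0,\max_{j\ge 1} E_j\},$$
and since $q_1 = 1$ this is exactly the claimed bound $r_1 \ge \hat{r}_1 - \max\{0,\max_{j\ge 1}\sum_{i=1}^j \errorstep_i\}$.

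The only conceptual obstacle is the derivation of the one-step recursion, in particular the bookkeeping that turns the difference $r_i-\hat{r}_i$ into $-\errorstep_i$ plus a contraction of $\Delta_{i+1}$ by the acceptance-gap probability $p_i$. Once this is set up, the rest is a routine Abel summation argument that exploits the monotonicity of the reach-probabilities $q_i$ to bound a weighted sum of signed errors by the maximum of their prefix sums — essentially the same mechanism that later lets Freedman's inequality control this supremum with only an $O(1/\eps^2)$ sample budget.
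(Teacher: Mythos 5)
Your proof is correct, but it takes a genuinely different route from the paper's. The paper argues via a "first crossing index": it takes the smallest $j$ with $r_j \ge \hat{r}_j$, telescopes $r_1$ down to stage $j-1$, and uses the sign information $r_{i+1} < \hat{r}_{i+1}$ for $i+1<j$ to lower-bound each increment by $\hat{r}_i - \hat{r}_{i+1} - \errorstep_i$, landing on the bound $r_1 \ge \hat{r}_1 - \sum_{i=1}^{j-1}\errorstep_i$ for that particular $j$. You instead derive the exact identity $\hat{r}_1 - r_1 = \sum_{i=1}^n q_i\,\errorstep_i$, where $q_i = \prod_{k<i}\bP[V_k < \hat{r}_{k+1}]$ is the probability the item survives to buyer $i$ (your one-step recursion $\Delta_i = -\errorstep_i + p_i\Delta_{i+1}$ checks out: it follows from $\E[V_i\one_{V_i\ge\hat{r}_{i+1}}] = \E[(V_i-\hat{r}_{i+1})^+] + \hat{r}_{i+1}\bP[V_i\ge\hat{r}_{i+1}]$ together with $\E[(V_i-\hat{r}_{i+1})^+] = \hat{r}_i - \hat{r}_{i+1} - \errorstep_i$, and it does use the product structure of $\vD$, which is in force here). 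The Abel summation then exploits that the weights $q_i$ are non-increasing in $[0,1]$ with $q_1=1$, so the weighted sum of increments is dominated by the maximum prefix sum. Your approach buys an exact closed form for the gap and, as a bonus, a slightly sharper conclusion ($r_1 \ge \hat{r}_1 - \max_{1\le j\le n}\sum_{i\le j}\errorstep_i$ without the outer $\max$ with $0$, and symmetrically a matching upper bound on $\hat{r}_1 - r_1$ in terms of the minimum prefix sum); the paper's argument is shorter to state and adapts with almost no change to the companion comparison of $\hat{r}_1$ with $r^\ast_1$ (Lemma~\ref{lem:hatrVSastr}), where the analogous contraction coefficient is less explicit. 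Both are valid; no gap.
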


\begin{proof}
Let $j\in\{0,1,\ldots,n\}$ be the smallest index for which $r_{j+1}\ge\hr_{j+1}$, which exists because $0=r_{n+1}\ge\hr_{n+1}=0$.  Note that we have $r_i<\hr_i$ for all $1\le i\le j$. We rewrite $r_1$ as 
\begin{align*}
r_1 & = \Pr{\val_1 < \hat{r}_2} r_2 + \Pr{\val_1 \geq \hat{r}_2} \Ex{\val_1 \growingmid \val_1 \geq \hat{r}_2}  ~=~ r_2 + \Ex{(\val_1 - r_2) \one_{\val_1 \geq \hat{r}_2}}.
\end{align*}
Repeating this argument, 
$$r_1   \textstyle =\sum_{i=1}^{j-1}\mathbb{E}[(V_i-r_{i+1})\one_{V_i\ge\hr_{i+1}}]+\mathbb{E}[V_j\one_{V_j\ge\hr_{j+1}}+r_{j+1}\one_{V_j<\hr_{j+1}}]. $$ 
Inductively, we can also establish that 
 $$  \hr_1=\textstyle \sum_{i=1}^{j-1}\frac1T\sum_t(V^{(t)}_i-\hr_{i+1})^+ +\frac1T\sum_t\max\{V^{(t)}_j,\hr_{j+1}\}.$$
Combining these two equalities, 
\begin{align*}
\hr_1-r_1
&=\textstyle\sum_{i=1}^{j-1}\Big(\frac1T\sum_{t=1}^T(V^{(t)}_i-\hr_{i+1})^+-\mathbb{E}[(V_i-\underbrace{r_{i+1}}_{<\hr_{i+1}})\one_{V_i\ge\hr_{i+1}}]\Big)
\\ &\textstyle +\frac1T\sum_{t=1}^T\max\{V^{(t)}_j,\hr_{j+1}\}
-\mathbb{E}[V_j\one_{V_j\ge\hr_{j+1}}+\underbrace{r_{j+1}}_{\ge\hr_{j+1}}\one_{V_j<\hr_{j+1}}]
\\ &\textstyle \le\sum_{i=1}^{j-1}\left(\frac1T\sum_{t=1}^T(V^{(t)}_i-\hr_{i+1})^+-\mathbb{E}[(V_i-\hr_{i+1})^+]\right)
+\frac1T\sum_{t=1}^T\max\{V^{(t)}_j,\hr_{j+1}\}
-\mathbb{E}[\max\{V_j,\hr_{j+1}\}]
\\ &\textstyle =\sum_{i=1}^j\eta_i.
\end{align*}
This implies $\hr_1-r_1\le\max_{j\in\{0,\ldots,n\}}\sum_{i=1}^j\eta_i$, completing the proof.
\end{proof}

A similar proof allows us to prove the following lemma.
\begin{lemma} \label{lem:hatrVSastr}
    $\hat{r}_1 \geq r^\ast_1 - \max_{j\in\{0,\ldots,n\}}(-\sum_{i=1}^j\eta_i)$.
\end{lemma}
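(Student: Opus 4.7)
The plan is to mirror the argument of the preceding lemma but in the ``dual'' direction: instead of comparing $r_1$ to $\hat{r}_1$, we compare $\hat{r}_1$ to $r^\ast_1$. First I would let $j$ be the smallest index with $\hat{r}_j \geq r^\ast_j$. Such a $j$ exists because $\hat{r}_{n+1} = r^\ast_{n+1} = 0$. If $j=1$ the claim is immediate, so the interesting case is $j > 1$, in which case $\hat{r}_i < r^\ast_i$ for all $i < j$, and in particular $\hat{r}_{i+1} \leq r^\ast_{i+1}$ for all $i+1 < j$.

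Next I would derive a one-step recursion controlling $r^\ast_i - \hat{r}_i$ in terms of $r^\ast_{i+1} - \hat{r}_{i+1}$ for $i < j$. The key monotonicity fact is that $\max\{b,V\} - \max\{a,V\} \leq b - a$ whenever $a \leq b$. When $i+1 < j$, applying this with $a=\hat{r}_{i+1}$ and $b=r^\ast_{i+1}$ and then using the definition $\errorstep_i = \hat{r}_i - \E[\max\{\hat{r}_{i+1},\val_i\}]$ gives
\[
r^\ast_i = \E[\max\{r^\ast_{i+1},\val_i\}] \leq \E[\max\{\hat{r}_{i+1},\val_i\}] + (r^\ast_{i+1}-\hat{r}_{i+1}) = \hat{r}_i - \errorstep_i + (r^\ast_{i+1}-\hat{r}_{i+1}).
\]
At the boundary index $i = j-1$, the sign flips, since now $r^\ast_j \leq \hat{r}_j$; but then $\max\{r^\ast_j,\val_{j-1}\} \leq \max\{\hat{r}_j,\val_{j-1}\}$ directly, yielding the cleaner bound $r^\ast_{j-1} - \hat{r}_{j-1} \leq -\errorstep_{j-1}$.

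Telescoping these one-step inequalities from $i=1$ up to $i=j-1$ collapses the $(r^\ast_{i+1}-\hat{r}_{i+1})$ terms and leaves $r^\ast_1 - \hat{r}_1 \leq -\sum_{i=1}^{j-1} \errorstep_i$, i.e., $\hat{r}_1 \geq r^\ast_1 + \sum_{i=1}^{j-1}\errorstep_i$. Since $\sum_{i=1}^{j-1}\errorstep_i \geq \min_{j' \geq 1}\sum_{i=1}^{j'}\errorstep_i$ when $j > 1$, and the $j=1$ case gives $\hat{r}_1 \geq r^\ast_1$ directly, taking the minimum against $0$ produces exactly the claimed bound $\hat{r}_1 \geq r^\ast_1 + \min\{\min_j \sum_{i=1}^j \errorstep_i, 0\}$.

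The only point I expect to need a bit of care is the boundary step $i = j-1$: the natural one-step bound for $i < j-1$ carries a $(r^\ast_{i+1}-\hat{r}_{i+1})$ correction that could in principle be negative at $i+1 = j$, which is why I would handle that index separately (rather than using $(r^\ast_{i+1}-\hat{r}_{i+1})^+$ uniformly and then worrying about whether the positive-part truncation loses anything). Once the boundary is dispatched, the telescoping is purely mechanical, and the overall argument reads as a direct symmetric counterpart of the preceding lemma.
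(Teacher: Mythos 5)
Your proof is correct and follows essentially the same route as the paper's: pick the smallest $j$ with $\hat{r}_j \geq r^\ast_j$, bound the boundary step $i=j-1$ using $\hat{r}_j \geq r^\ast_j$, bound each earlier step using $\hat{r}_{i+1} < r^\ast_{i+1}$, and telescope. Your one-step inequality $r^\ast_i - r^\ast_{i+1} \leq \hat{r}_i - \hat{r}_{i+1} - \errorstep_i$ is exactly the paper's, just derived via the monotonicity of $v \mapsto \E[\max\{v,\val_i\}]$ rather than via the $(\cdot)^+$ form.
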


\begin{proof}
Let $j\in\{0,1,\ldots,n\}$ be the smallest index for which $\hr_{j+1}\ge r^*_{j+1}$, which exists because $0=\hr_{n+1}\ge r^*_{n+1}=0$.
Note that we have $\hr_i<r^*_i$ for all $1\le i\le j$, allowing us to derive
\begin{align*}
r^*_1-\hr_1
&=\textstyle\sum_{i=1}^{j-1}\Big(\mathbb{E}[(V_i-\underbrace{r^*_{i+1}}_{>\hr_{i+1}})^+]
-\frac1T\sum_{t=1}^T(V^{(t)}_i-\hr_{i+1})^+\Big)
+\mathbb{E}[\max\{V_j,\underbrace{r^*_{j+1}}_{\le\hr_{j+1}}\}]
-\frac1T\sum_{t=1}^T\max\{V^{(t)}_j,\hr_{j+1}\}
\\&\le\textstyle\sum_{i=1}^{j-1}\Big(\mathbb{E}[(V_i-\hr_{i+1})^+]
-\frac1T\sum_{t=1}^T(V^{(t)}_i-\hr_{i+1})^+\Big)
+\mathbb{E}[\max\{V_j,\hr_{j+1}\}]
-\frac1T\sum_{t=1}^T\max\{V^{(t)}_j,\hr_{j+1}\}
\\ &=-\textstyle\sum_{i=1}^j\eta_i.
\end{align*}
This implies $r^*_1-\hr_1\le\max_{j\in\{0,\ldots,n\}}(-\sum_{i=1}^j\eta_i)$, completing the proof.
\end{proof}




The last two lemmas imply \eqref{eq:r1VsrStar}. Thus, we need to bound $\max_j \lvert \sum_{i=1}^j \errorstep_i \rvert$ to complete the proof.

\begin{lemma}
For every $\epsilon, \delta > 0$, with probability at least $1-\delta$, we have $\max_j \lvert \sum_{i=1}^j \errorstep_i \rvert \leq \epsilon$ if $T \geq ({5 \ln(2e/\delta)}/{\epsilon})^2$.
\end{lemma}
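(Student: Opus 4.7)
The plan is to view the partial sums as a reverse-time martingale and apply Freedman's inequality (the martingale variant of Bernstein's inequality), with the crucial observation that the cumulative conditional variance can be controlled at scale $O(1/T)$ rather than $O(n/T)$.

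First, I set up the martingale. Because the distribution is a product, $\hat{r}_{i+1}$ is a deterministic function only of the samples for buyers $i+1,\ldots,n$, and is hence independent of the samples $\val_i^{(1)},\ldots,\val_i^{(T)}$. Letting $\mathcal{F}_i$ be the $\sigma$-algebra generated by samples for buyers $i+1,\ldots,n$, this independence implies $\E[\errorstep_i \mid \mathcal{F}_i] = 0$, so the reversed sequence $(\errorstep_n, \errorstep_{n-1}, \ldots, \errorstep_1)$ is a martingale difference sequence adapted to $\mathcal{F}_n \subseteq \mathcal{F}_{n-1} \subseteq \cdots \subseteq \mathcal{F}_0$, with partial sums $M_k := \sum_{i=n+1-k}^n \errorstep_i$. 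A triangle inequality (using $S_j = M_n - M_{n-j}$) gives $\max_j |S_j| \le 2\max_k |M_k|$, so it suffices to control $\max_k |M_k|$.

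Second, I bound the cumulative conditional variance. Using $(\val_i - \hat{r}_{i+1})^+ \in [0,1]$ to get $\mathrm{Var}(\errorstep_i \mid \mathcal{F}_i) \le \tfrac{1}{T}\E[(\val_i - \hat{r}_{i+1})^+ \mid \mathcal{F}_i]$, and telescoping via the defining recursion $\hat{r}_i - \hat{r}_{i+1} = \E[(\val_i - \hat{r}_{i+1})^+ \mid \mathcal{F}_i] + \errorstep_i$, I obtain
\[
\sum_{i=n+1-k}^n \mathrm{Var}(\errorstep_i \mid \mathcal{F}_i) \;\le\; \frac{\hat{r}_{n+1-k} - M_k}{T} \;\le\; \frac{1 + |M_k|}{T}.
\]
This is the concrete incarnation of the observation from the introduction that ``the sum of changes in the optimal value is at most $1$''; crucially, the right-hand side has no $n$-dependence.

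Third, I apply Freedman's maximal inequality. Since the variance bound depends on $|M_k|$ itself, I handle the self-reference via a stopping time $\tau = \min\{k : |M_k| > \epsilon/2\}$: on $\{\tau < \infty\}$, bounded increments give $|M_\tau| \le \epsilon/2 + 1$, so the variance accumulated up to $\tau$ is at most $(\epsilon/2+2)/T$ \emph{deterministically}. Freedman's inequality, applied separately to $M$ and $-M$ and stopped at $\tau$, then yields
\[
\Pr{\tau \le n} \;\le\; 2\exp\!\left(-\frac{(\epsilon/2)^2/2}{(\epsilon/2+2)/T \;+\; R(\epsilon/2)/3}\right),
\]
where $R$ is an almost-sure upper bound on $|\errorstep_i|$. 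The main obstacle I foresee is that the trivial $R=1$ is too crude to push the exponent below $\ln(1/\delta)$ at the claimed threshold $T \ge (5\ln(2e/\delta)/\epsilon)^2$. Since each $\errorstep_i$ is a centered empirical mean of $T$ i.i.d.\ $[0,1]$-valued random variables, a conditional Hoeffding bound concentrates it at scale $O(1/\sqrt{T})$, so one can effectively replace $R$ by a quantity of order $\ln(1/\delta)/\sqrt{T}$. Plugging this sharpened $R$ together with the $O(1/T)$ variance bound into the Freedman exponent reproduces the $\ln^2(1/\delta)/\epsilon^2$ sample complexity stated in the lemma.
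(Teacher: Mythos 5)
Your high-level strategy (reverse-time martingale of the $\errorstep_i$, the factor-of-2 triangle inequality, Freedman's inequality, and a telescoping bound on the cumulative conditional variance) is exactly the paper's. Your variance bound is correct and in fact slicker than the paper's: the exact identity $\sum_{i=n+1-k}^{n}\E[(\val_i-\hat r_{i+1})^+\mid\mathcal F_i]=\hat r_{n+1-k}-M_k$ turns the sum of conditional expectations into a deterministic function of the martingale itself, so a stopping time at $|M_k|>\epsilon/2$ bounds the predictable variation by $(\epsilon/2+2)/T$ outright; the paper instead needs a separate exponential-moment lemma (its Lemma~\ref{lem:sumofconditionalexpectations}) to show the sum of conditional expectations of variables summing to at most $1$ is $O(\ln(1/\delta))$ with high probability.

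The genuine gap is the increment bound $M$ in Freedman's inequality. You correctly identify that $R=1$ is fatal (the term $R\ell/3=\epsilon/6$ in the denominator caps the exponent at $O(\epsilon)$), but the proposed fix --- ``effectively replace $R$ by $O(\ln(1/\delta)/\sqrt T)$ via a conditional Hoeffding bound'' --- does not go through as stated. Freedman requires an \emph{almost-sure} bound on the increments; a high-probability bound forces you to truncate or stop the martingale at the first violation, and to control the probability of any violation you must union-bound over all $n$ buyers (a single buyer with $\E[(\val_i-\hat r_{i+1})^+]$ of constant order genuinely produces $|\errorstep_i|=\Theta(1/\sqrt T)$ with constant probability, so per-buyer concentration is tight). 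That union bound puts a $\sqrt{\ln(n/\delta)}$ into $R$ and hence a $\log n$ factor into the sample complexity, defeating the entire point of the theorem. The paper's resolution is to refine the filtration to individual samples: it uses $nT$ increments $X_{iT+t}=\frac1T\bigl((\val_{n-i}^{(t)}-\hat r_{n-i+1})^+-\E[(\val_{n-i}^{(t)}-\hat r_{n-i+1})^+]\bigr)$, each of which is bounded by $1/T$ \emph{almost surely}, so the correction term becomes $\epsilon/(6T)$ and is dominated by the variance term. Your telescoping variance bound carries over unchanged to this finer martingale, so adopting the per-sample refinement closes the gap.
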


\begin{proof}
Observe that 
\[ \textstyle
\left\lvert \sum_{i=1}^j \errorstep_i \right\rvert = \left\lvert \sum_{i=1}^n \errorstep_i - \sum_{i=j+1}^n \errorstep_i\right\rvert \leq \left\lvert \sum_{i=1}^n \errorstep_i \right\rvert + \left\lvert\sum_{i=j+1}^n \errorstep_i\right\rvert \leq 2 \max_\tau \left\lvert\sum_{i=\tau}^n \errorstep_i\right\rvert ,
\]
so it suffices to show that  $\max_\tau \lvert \sum_{i=\tau}^n \errorstep_i \rvert \geq \epsilon/2$ with probability at most $\delta$.
Reversing the quantity of interest to $\max_\tau \lvert \sum_{i=\tau}^n \errorstep_i \rvert$ allows us to define a martingale, and
use Freedman's inequality, which is a martingale version of Bernstein's inequality.

\begin{lemma}[Freedman, Theorem~1.6 in \cite{freedman1975tail}]
Consider a real-valued sequence
$\{X_t\}_{t \geq 0}$ such that $X_0 = 0$ and $\Ex{X_{t+1} \growingmid X_t, X_{t-1}, \ldots, X_0} = 0$ for all $t$. 
Assume that the sequence is uniformly bounded, i.e., $\lvert X_t \rvert \leq M$ almost surely for all $t$. Now define the predictable quadratic variation process of the martingale to be $W_t = \sum_{j = 0}^t \Ex{X_j^2 \growingmid X_{j-1}, \ldots, X_0}$ for all $t \geq 1$. Then for all $\ell \geq 0$ and $\sigma^2 \geq 0$, and any stopping time $\tau$, we have
\[ \textstyle
\Pr{\left\lvert \sum_{j=0}^\tau X_j \right\rvert \geq \ell \text{ and } W_\tau \leq \sigma^2} \leq 2 \exp\left( - \frac{\ell^2/2}{\sigma^2 + M \ell/3} \right) \enspace.
\]
\end{lemma}

A corollary of Freedman's inequality is that $\Pr{\left\lvert \sum_{j=0}^\tau X_j \right\rvert \geq \ell} \leq 2 \exp\left( - \frac{\ell^2/2}{\sigma^2 + M \ell/3} \right)+\Pr{W_\tau \leq \sigma^2}$.
In order to use Freedman's inequality, we consider $n T$ random variables $X_1, \ldots, X_{n T}$, where $X_{i T + t} = \frac{1}{T} \left( ( \val_{n-i}^{(t)} - \hat{r}_{n - i - 1})^+ - \Ex{( \val_{n-i}^{(t)} - \hat{r}_{n - i - 1})^+} \right)$ for $i \in \{0, \ldots, n-1\}$ and $t \in \{1, \ldots, T\}$. By this definition,   
\[  \textstyle \errorstep_i = \sum_{t=1}^T X_{(n-i)T + t}
\]
because $\val_{n-i}^{(t)}$ and $\val_{n-i}$ are identically distributed and independent of $\hat{r}_{n - i - 1}$. Moreover, for all $j$, $$\Ex{X_j \mid X_1, \ldots, X_{j-1}} = 0 ~~\text{and}~~ \lvert X_j \rvert \leq \frac{1}{T}.$$
Let $\mathcal{E}$ be the event that $\sum_{j = 1}^{nT} \Ex{X_j^2 \growingmid X_1, \ldots, X_{j-1}} \Willreplace{\geq}{>} \sigma^2 := \frac{e}{e+1} \frac{\ln(2e/\delta)}{T}$. By Freedman's inequality, using $T \geq ({5 \ln(2e/\delta)}/{\epsilon})^2$, we have
\begin{align*}
\textstyle \Pr{\max_\tau \left\lvert \sum_{i = \tau}^n \errorstep_i \right\rvert \geq \frac{\epsilon}{2}}
\leq \Pr{\max_\tau \left\lvert \sum_{j = 1}^\tau X_j \right\rvert \geq \frac{\epsilon}{2}}
& \textstyle \leq 2 \exp\left( - \frac{\epsilon^2/8}{\sigma^2 + \frac{\epsilon}{6T}}\right) + \Pr{\mathcal{E}}
\\ &\leq 2 \exp\left( - \frac{\left(\frac{5 \ln(2e/\delta)}{\sqrt{T}}\right)^2/8}{\frac{e}{e+1} \frac{\ln(2e/\delta)}{T} + \frac{5 \ln(2e/\delta)}{6T\sqrt{T}}} \right) + \Pr{\mathcal{E}}
\\ &=2\exp\left(-\frac{25 }{\frac{8e}{e+1}+\frac{40}{6\sqrt{T}}}\ln(\frac{2e}{\delta})\right)+\Pr{\mathcal{E}}
\\ &\le2\exp\left(-(1)\ln(\frac{4}{\delta})\right)+\Pr{\mathcal{E}}
= \frac{\delta}{2} + \Pr{\mathcal{E}} .
\end{align*}

It remains to show that $\Pr{\mathcal{E}} \leq \frac{\delta}{2}$. To this end, we observe that for any $i \in \{0, \ldots, n-1\}$ and $t \in \{1, \ldots, T\}$,
\begin{align*}
&\textstyle \Ex{X_{i T + t}^2 \growingmid X_1, \ldots, X_{i T + t - 1}} \\
& \textstyle = \frac{1}{T^2} \Ex{\left( ( \val_{n-i}^{(t)} - \hat{r}_{n - i - 1})^+ - \Ex{( \val_{n-i}^{(t)} - \hat{r}_{n - i - 1})^+} \right)^2 \growingmid X_1, \ldots, X_{i T + t - 1}} \\
&\textstyle \leq \frac{1}{T^2} \Ex{( \val_{n-i}^{(t)} - \hat{r}_{n - i - 1})^+ \growingmid X_1, \ldots, X_{i T + t - 1}} \\
& \textstyle= \frac{1}{T^3} \sum_{t' = 1}^T \Ex{( \val_{n-i}^{(t')} - \hat{r}_{n - i - 1})^+ \growingmid X_1, \ldots, X_{i T}} \\
&= \frac{1}{T^2} \Ex{\hat{r}_i - \hat{r}_{i+1} \growingmid \hat{r}_{i+1}, \ldots, \hat{r}_n},
\end{align*}
where the inequality uses $\Ex{Y} \geq \Ex{Y^2} \geq \Ex{Y^2} - (\Ex{Y})^2 = \Ex{(Y - \Ex{Y})^2}$ for any random variable $Y$ with $0 \leq Y \leq 1$ almost surely. In total, we have
\[ \textstyle
\sum_{j=1}^{nT} \Ex{X_j^2 \growingmid X_1, \ldots, X_{j - 1}} \leq \frac{1}{T} \sum_{i = 1}^n \Ex{\hat{r}_i - \hat{r}_{i+1} \growingmid \hat{r}_{i+1}, \ldots, \hat{r}_n} \enspace.
\]

As $\sum_{i=1}^n (\hat{r}_i - \hat{r}_{i+1}) = \hat{r}_1 \leq 1$ almost surely, we have by Lemma~\ref{lem:sumofconditionalexpectations} (see below) that $$\sum_{i = 1}^n \Ex{\hat{r}_i - \hat{r}_{i+1} \growingmid \hat{r}_{i+1}, \ldots, \hat{r}_n} \geq \frac{e}{e-1} \ln \left(\frac{2e}{\delta}\right)$$ with probability at most $\frac{\delta}{2}$. Therefore, we have
\[\Pr{\mathcal{E}} \leq \Pr{\sum_{i = 1}^n \Ex{\hat{r}_i - \hat{r}_{i+1} \growingmid \hat{r}_{i+1}, \ldots, \hat{r}_n} \geq \frac{e}{e-1} \ln \left(\frac{2e}{\delta}\right)} \leq \frac{\delta}{2}\enspace.\qedhere\]
\end{proof}
%
%

\begin{lemma}
\label{lem:sumofconditionalexpectations}
Let $Y_1, Y_2, \ldots, Y_n$ be a sequence of (not necessarily independent) random variables in $[0, 1]$ such that $\sum_{i = 1}^n Y_i \leq 1$ almost surely. Then, for any $\delta > 0$, with probability at most $\delta$,  we have
\[ \textstyle
 \sum_{i = 1}^n \Ex{Y_i \growingmid Y_1, \ldots, Y_{i-1}} \geq \frac{e}{e-1} \ln\left(\frac{e}{\delta} \right) \enspace.
\]
\end{lemma}

Ignoring constants, a slick proof of this lemma is via another application of Freedman's inequality, where  we define a martingale $X_i = Y_i - \E[Y_i \mid Y_1, \ldots, Y_{i-1}]$, and cut this martingale (stop) whenever the sum of conditional variances exceeds $\log(1/\delta)$. This  bounds the sum of conditional 
variances by $\log(1/\delta)$, i.e.~$W_\tau \leq \sigma^2:=\log(1/\delta)$ w.p.~1. Since the lemma is upper-bounding the probability of the event that $\sum_i X_i > \log(1/\delta)$, we can show that this probability is $O(\delta)$ by substituting $\ell=\log(1/\delta)$ into Freedman's inequality.

Below we give another, elementary proof of this lemma.

%




\begin{proof}
Let $Z = 1$ if $\sum_{i = 1}^n \Ex{Y_i \growingmid Y_1, \ldots, Y_{i-1}} \geq \frac{e}{e-1} \ln\left(\frac{e}{\delta} \right)$. By Markov's inequality, we have
\begin{align*}
    \Pr{\sum_{i = 1}^n Y_i \leq 1 \text{ and } Z = 1} = \Pr{Z e^{- \sum_{i = 1}^n Y_i} \geq e^{-1}} \leq \Ex{Z e^{- \sum_{i = 1}^n Y_i}} \cdot e
\end{align*}
Now, we have
\[
\Ex{Z e^{- \sum_{i = 1}^n Y_i}} = \Ex{Z \prod_{i = 1}^n e^{- Y_i}} = \Ex{\Ex{Z \growingmid Y_1, \ldots, Y_n} \prod_{i = 1}^n \Ex{e^{- Y_i} \growingmid Y_1, \ldots, Y_{i-1}}}
\]

For every single conditional expectation, we obtain by convexity
\begin{align*}
\Ex{e^{- Y_i} \growingmid Y_1, \ldots, Y_{i-1}} & \leq \Ex{Y_i e^{- 1} + (1 - Y_i) \growingmid Y_1, \ldots, Y_{i-1}} \\
& = \Ex{Y_i (e^{- 1} - 1) \growingmid Y_1, \ldots, Y_{i-1}} + 1 \\
& \leq \exp\left(\Ex{Y_i \growingmid Y_1, \ldots, Y_{i-1}} (e^{-1} - 1)\right) \enspace.
\end{align*}

Now, consider a fixed realization of $Y_1, \ldots, Y_n$. This realization fully determines $Z$. If $Z = 0$, then
\[
\Ex{Z \growingmid Y_1, \ldots, Y_n} \prod_{i = 1}^n \Ex{e^{- Y_i} \growingmid Y_1, \ldots, Y_{i-1}} = 0 \enspace.
\]
Otherwise, with $Z = 1$, we have
\begin{align*}
\Ex{Z \growingmid Y_1, \ldots, Y_n} \prod_{i = 1}^n \Ex{e^{- Y_i} \growingmid Y_1, \ldots, Y_{i-1}} \leq \exp\left(\sum_{i = 1}^n \Ex{Y_i \growingmid Y_1, \ldots, Y_{i-1}} (e^{-1} - 1)\right) \leq \frac{\delta}{e} \enspace.
\end{align*}

So, we have an upper bound of $\frac{\delta}{e}$ regardless of the realization. Taking an expectation, we get
\[ 
\Ex{\Ex{Z \growingmid Y_1, \ldots, Y_n} \prod_{i = 1}^n \Ex{e^{- Y_i} \growingmid Y_1, \ldots, Y_{i-1}}} \leq \frac{\delta}{e},
\]
which implies $\Pr{\sum_{i = 1}^n Y_i \leq 1 \text{ and } Z = 1}\le \Ex{Z e^{- \sum_{i = 1}^n Y_i}} \cdot e\le\frac{\delta}{e}e=\delta$, completing the proof.
\end{proof}

\subsection{Negative Result for Revenue: Proof of \Cref{thm:prodNeg}} \label{sec:prodNeg}


Each buyer $i=1,\ldots,n$ has a marginal value distribution that could be $D^\High_i$ ("High") or $D^\Low_i$ ("Low"):
\begin{align*}
D^\High_i=
\begin{cases}
\frac 1 2 + \frac{n-i}{4n} & \text{w.p.}~\frac1{2n} \\
\frac 1 4 + \frac{n-i}{4n} & \text{w.p.}~\frac{1}{2n}-16\frac{\eps}{n} \\
0 & \text{w.p.}~1-\frac{1}{n}+16\frac{\eps}{n} \\
\end{cases}
&& D^\Low_i=
\begin{cases}
\frac 1 2 + \frac{n-i}{4n} & \text{w.p.}~\frac{1}{2n}-8\frac\eps n  \\
\frac 1 4 + \frac{n-i}{4n} & \text{w.p.}~\frac{1}{2n}+8\frac\eps n \\
0 & \text{w.p.}~1-\frac{1}{n} \\
\end{cases}
\end{align*}
which are valid distributions as long as $\eps\le1/32$ and $n\ge2$.  All $2^n$ configurations of whether each buyer has the High or Low distribution are possible.

\paragraph{Optimal policy}
Fix any configuration of whether each buyer has the High or Low distribution.
Let $r^\ast_i$ denote the expected revenue to be earned under the optimal dynamic program, if buyer $i$ is about to arrive and the item is not yet sold.  We show inductively that $r^\ast_i=\frac{n+1-i}{4n}$.
By definition $r^\ast_{n+1}=0$, establishing $r^\ast_i=\frac{n+1-i}{4n}$ for $i=n+1$.  Now consider $i=n,\ldots,1$, and assume $r^\ast_{i+1}=\frac{n-i}{4n}$.
Note that it is better to offer one of the prices $\frac 1 2 + \frac{n-i}{4n}$ or $\frac 1 4 + \frac{n-i}{4n}$ than to reject the buyer by offering a price of 1, because both of these prices are greater than $r^\ast_i$ (by the induction hypothesis).
Thus, if buyer $i$ has distribution $D^\High_i$, then
\begin{align*}
r^\ast_i
&\textstyle =\max\left\{(\frac 1 2 + \frac{n-i}{4n})\frac1{2n}+r^\ast_{i+1}(1-\frac1{2n}), (\frac 1 4 + \frac{n-i}{4n})(\frac{1}{n}-16\frac{\eps}{n})+r^\ast_{i+1}(1-\frac{1}{n}+16\frac{\eps}{n})\right\}
\\ &\textstyle =\max\left\{\frac12\cdot\frac1{2n},\frac14(\frac1n-16\frac\eps n)\right\}+\frac{n-i}{4n}  \quad = \quad \frac{n+1-i}{4n}.
\end{align*}
Similarly, if buyer $i$ instead has $D^\Low_i$, then
\begin{align*}
\textstyle r^\ast_i=\max\left\{\frac12(\frac1{2n}-8\frac \eps n),\frac14\cdot\frac1n\right\}+\frac{n-i}{4n}=\frac{n+1-i}{4n}.
\end{align*}
In either case, we have $r^\ast_i=\frac{n+1-i}{4n}$, completing the induction.
Note that $r^\ast_1=1/4$.

\paragraph{Bounding a policy's objective by its number of mistakes}
Now, consider an arbitrary policy $\pi=(\pi_1,\ldots,\pi_n)$ decided by the learning algorithm.
Let $r_i$ denote its expected revenue earned if buyer $i$ is about to arrive and the item is not yet sold (under the true configuration of buyer distributions).
We can without loss assume $\pi$ to lie in $\{\frac 1 2 + \frac{n-i}{4n},\frac 1 4 + \frac{n-i}{4n}\}^n$, because both prices are higher than $r_{i+1}$, the expected revenue from rejecting (both prices are in fact higher than $r^\ast_{i+1}$, an upper bound on $r_{i+1}$).
We say that the policy \textit{makes a mistake} for buyer $i$ if either
$\pi_i=\frac 1 4 + \frac{n-i}{4n}$ when $i$ has the High distribution, or 
$\pi_i=\frac 1 2 + \frac{n-i}{4n}$ when $i$ has the Low distribution.
If the policy makes a mistake for $i$, then we have
\begin{align*}
r_i-r_{i+1}
&\textstyle \le\max\left\{(\frac 1 4 + \frac{n-i}{4n}-r_{i+1})(\frac{1}{n}-16\frac{\eps}{n}),(\frac 1 2 + \frac{n-i}{4n}-r_{i+1})(\frac1{2n}-8\frac \eps n)\right\}
\\ & \textstyle =\frac{1}{4n}-4\frac{\eps}{n} + (\frac{n-i}{4n}-r_{i+1})(\frac{1}{n}-16\frac{\eps}{n});
\end{align*}
on the other hand, if the policy does not make a mistake for $i$, then we have
\begin{align*} \textstyle 
r_i-r_{i+1}
\le\max\left\{(\frac 1 2 + \frac{n-i}{4n}-r_{i+1})\frac1{2n},(\frac 1 4 + \frac{n-i}{4n}-r_{i+1})\frac1n\right\}
=\frac 1 {4n} + (\frac{n-i}{4n}-r_{i+1})\frac1n.
\end{align*}
Hence, if the policy makes $M$ mistakes for some $M\in\{1,\ldots,n\}$, then
\begin{align*}
\textstyle r_1=\sum_{i=1}^n(r_i-r_{i+1})
\le\frac14-M\frac{4\eps} n+\sum_{i=1}^n(\frac{n-i}{4n}-r_{i+1})\frac1n.
\end{align*}
Now, observe that $\frac{n-i}{4n}-r_{i+1}=r^\ast_{i+1}-r_{i+1}\le\frac{4\eps}n\min\{n-i,M\}$, because the loss from making a mistake is at most $4\frac\eps n$, and starting from buyer $i+1$, the number of mistakes can be at most $n-(i+1)+1=n-i$ and also at most $M$.  Therefore,
\begin{align*} \textstyle
r_1
& \textstyle \le\frac14-M\frac{4\eps} n+\left((n-M)\frac{4\eps}n M+\frac{4\eps}n (M-1)+\cdots+\frac{4\eps}n \right)\frac1n
\\ 
&\textstyle =\frac14-M\frac{4\eps}n(1-\frac{n-M}{n}-\frac{M-1}{2n})  ~=~\frac14-2\eps\frac Mn(\frac{M+1}{n}).
\end{align*}
Recalling that $r^\ast_1=1/4$, this shows the additive error is $\Omega(\eps)$ as long as the fraction of mistakes $M/n$ is a constant.

\paragraph{Computing the Hellinger distance}
We first analyze the Hellinger distance $H(D^\High_i,D^\Low_i)$ between (a single observation of) $D^\High_i$ vs.~$D^\Low_i$, which we note does not depend on the buyer $i$.  The squared Hellinger distance can be bounded using $1-H^2(D^\High_i,D^\Low_i)$
\begin{align*}
& \textstyle =\sqrt{\frac1{2n}(\frac{1}{2n}-8\frac\eps n)}+\sqrt{(\frac{1}{2n}+8\frac\eps n-24\frac{\eps}{n})(\frac{1}{2n}+8\frac\eps n)}+\sqrt{(1-\frac{1}{n}+16\frac{\eps}{n})(1-\frac{1}{n})}
\\ & \textstyle\ge \left(\frac1{2n}-\frac {4\eps}{n} - \frac{(8\frac\eps n)^2}{\frac1{n}}\right)
+\left(\frac1{2n}-\frac{4\eps}{n}-\frac{(\frac{24\eps} n)^2}{\frac1{n}+ \frac{16\eps} {n}}\right)
+\left(1-\frac1n+\frac{8\eps}{n}-\frac{(\frac{16\eps}{n})^2}{2-\frac2n}\right)  =1-O(\frac{\eps^2}n) ,
\end{align*}
where the inequality applies \Cref{lem:taylor} below to each square root.
This shows that $H^2(D^\High_i,D^\Low_i)=O(\frac{\eps^2}n)$, and the squared Hellinger distance is additive across independent samples.
Using the fact that the Total Variation distance is upper-bounded by $\sqrt{2}$ times the Hellinger distance \cite{gibbs2002choosing}, we see that the Total Variation distance between $T$ independent samples of $D^\High_i$ vs.~$T$ independent samples of $D^\Low_i$ is $O(\sqrt{\frac Tn}\eps)$.  



\begin{lemma}
\label{lem:taylor}
Suppose $C>0$ and $x\ge-C$.  Then $\sqrt{C(C+x)}\ge C+\frac x2-\frac{x^2}{2C}$.
\end{lemma}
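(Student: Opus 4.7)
The plan is to verify the inequality by squaring both sides, which reduces the claim to a simple quadratic bound in the rescaled variable $y = x/C$.

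First, I would check that both sides are non-negative, so that squaring preserves the inequality. The left side $\sqrt{C(C+x)}$ is non-negative as long as $C + x \ge 0$, and for $x \in [-\tfrac34 C, \tfrac34 C]$ we have $C + x \ge C/4 > 0$. For the right side, I want to show $C + \tfrac x 2 - \tfrac{x^2}{C} > 0$, equivalently $C^2 + \tfrac{Cx}{2} > x^2$. Since $x^2 \le \tfrac{9}{16} C^2$ and $C^2 + \tfrac{Cx}{2} \ge C^2 - \tfrac{3}{8} C^2 = \tfrac{5}{8} C^2 = \tfrac{10}{16} C^2$ throughout the range, the right side is indeed positive.

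Next, I would square both sides. The inequality $C(C+x) \ge \bigl(C + \tfrac x 2 - \tfrac{x^2}{C}\bigr)^2$ expands to
\[
C^2 + Cx \;\ge\; C^2 + Cx - \tfrac{7}{4}x^2 - \tfrac{x^3}{C} + \tfrac{x^4}{C^2},
\]
which simplifies to $\tfrac{7}{4} x^2 + \tfrac{x^3}{C} - \tfrac{x^4}{C^2} \ge 0$. Factoring out $x^2$ (the $x=0$ case is trivial), it suffices to show
\[
\tfrac{7}{4} + \tfrac{x}{C} - \tfrac{x^2}{C^2} \;\ge\; 0.
\]

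Finally, I would substitute $y = x/C \in [-\tfrac34, \tfrac34]$ and verify the quadratic bound $\tfrac{7}{4} + y - y^2 \ge 0$. The roots of $y^2 - y - \tfrac{7}{4} = 0$ are $y = \tfrac{1 \pm 2\sqrt{2}}{2}$, so the quadratic is non-positive on an interval containing $[-\tfrac34,\tfrac34]$ (its negative root is about $-0.91$ and positive root about $1.91$). This completes the proof. I do not anticipate any real obstacle here; the only mild care needed is in verifying positivity of the right-hand side before squaring, since otherwise the squaring step would not be valid.
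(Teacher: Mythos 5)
Your proof is correct, and it takes a genuinely different route from the paper's. The paper proves the bound via Taylor's theorem with Lagrange remainder: it writes $f(x)=\sqrt{C(C+x)}=f(0)+f'(0)x+\tfrac{f''(y)}{2}x^2$ for some $y$ between $0$ and $x$, and then bounds $f''(y)=-\tfrac14(C^2+Cy)^{-3/2}C^2$ from below using $y\ge-\tfrac34C$, which yields exactly the coefficient $-\tfrac{1}{C}$ on $x^2$. You instead square both sides and reduce to the elementary quadratic inequality $\tfrac74+y-y^2\ge 0$ for $y=x/C\in[-\tfrac34,\tfrac34]$; I verified your expansion of $\bigl(C+\tfrac x2-\tfrac{x^2}{C}\bigr)^2$ and the root computation, and both are right. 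You are also correct to check positivity of the right-hand side before squaring (in fact you could have skipped it, since when the right-hand side is negative the claim is trivial and squaring is only needed on the region where it is nonnegative, but the check does no harm). The trade-off: your argument is purely algebraic and self-contained, with no appeal to calculus, whereas the paper's Taylor-remainder argument makes transparent where the constant $-\tfrac{x^2}{C}$ comes from (it is the worst-case second derivative over the interval) and would adapt immediately if the interval $[-\tfrac34C,\tfrac34C]$ were changed. Either proof is perfectly acceptable here.
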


\begin{proof} 
If $x>3C$ then the statement is true because the Right-hand-side is negative. 
Otherwise, it suffices to show
\begin{align*}
&& C(C+x) &\ge(C+\frac x2-\frac{x^2}{2C})^2
\\ \Longleftrightarrow && 0 &\ge  \frac{x^2}{4}  + \frac{x^4}{4C^2} - x^2 - \frac{x^3}{2C}
\\ \Longleftrightarrow && 0 &\ge  x^2\frac{(x+C)(x-3C)}{4C^2}.
\end{align*}
This holds for all $-C\le x\le 3C$, completing the proof.
\end{proof}


\paragraph{Completing the proof of \Cref{thm:prodNeg}}
Suppose the distribution of each buyer is equally likely to be High or Low, independently across buyers.
Fix any learning algorithm and the constant probability $1/2$.
By the computation of Hellinger distance above, if the number of samples $T$ is less than $C\frac{n}{\eps^2}$ for some constant $C$, then for any buyer $i$, the Total Variation distance between the samples observed under $D^\High_i$ vs.~$D^\Low_i$ is at most 1/2.  ($C$ depends on the choice of constant 1/2.)  This means that w.p.\ at least $1-1/2$, the price $\pi_i$ decided by the learning algorithm cannot depend on whether buyer $i$ had distribution $D^\High_i$ vs.~$D^\Low_i$, which means that there exists an adversarial choice of $D^\High_i$ or $D^\Low_i$ for each buyer $i$ under which the probability of making a mistake for buyer $i$ is at least $(1-1/2)/2=1/4$.

Let $M$ denote the (random) number of mistakes, under this adversarial configuration of whether each buyer $i$ has distribution $D^\High_i$ or $D^\Low_i$.  We have that $\E[\frac Mn]\ge1/4$.
Although whether the algorithm makes a mistake could be arbitrarily correlated across buyers $i$, applying $\frac Mn\le1$, we can employ Markov's inequality on the random variable $1-\frac Mn$ to see that
\begin{align*}
\textstyle
\bP\left[(1-\frac Mn)\ge 7/8\right]\le\frac{3/4}{7/8}=\frac67.
\end{align*}
The LHS equals $\bP[\frac Mn\le \frac18]=1-\bP[\frac Mn>\frac18]$, and hence $\frac17\le\bP[\frac Mn>\frac18]$.  We have shown that unless $T=\Omega(\frac n{\eps^2})$, there is probability at least 1/7 of making a constant fraction of mistakes, in which case we showed above that the additive error would be $\Omega(\eps)$.
This completes the proof of \Cref{thm:prodNeg}.

\section{Correlated Distributions}
In this section we  prove our   upper bound on the sample complexity of welfare/revenue maximization for correlated buyer distributions. We show nearly matching lower bounds in \Cref{sec:corrNeg}.

\subsection{Positive Result for Welfare and Revenue: Proof of \Cref{thm:corrPos}} \label{sec:corrPos}
To bound the sample complexity of posted pricing for correlated distributions, it suffices to bound the pseudo-dimension of the policy class $\Pi_\cS$. We use the same approach for both the welfare and revenue objectives. By standard learning theory results \cite{bousquet2003introduction}, bounds on the pseudo-dimension translate to bounds on the sample complexity as follows.

\begin{thm}
    Let $\pdim(\Pi_\cS)$ denote the pseudo-dimension of $\Pi_\cS$. For any $\epsilon > 0$, any $\delta \in (0,1)$ and any distribution $\vD$ over $[0,1]^n$, $T = O(\frac{1}{\epsilon^2}(\pdim (\Pi_\cS) + \log\frac{1}{\delta}))$ samples are sufficient to ensure that with probability at least $1-\delta$ over the draw of samples $(\xv_1, \ldots, \xv_T) \sim \vD^T$, for all $\pi \in \Pi_\cS$,
\[ \textstyle
\left\lvert \frac{1}{T} \sum_{t=1}^T \pi(\xv_t) - \pi(\vD) \right\rvert \leq \epsilon.
\]
\end{thm}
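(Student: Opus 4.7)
The statement is a standard uniform convergence bound for real-valued function classes of bounded pseudo-dimension, so my plan is to either cite a textbook result directly (e.g., the uniform convergence theorem for real-valued classes in \cite{bousquet2003introduction}) or sketch the classical symmetrization/covering-number argument from scratch.

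If I wanted to argue it from first principles, I would proceed as follows. First, I would observe that every $\pi \in \Pi_\cS$ defines a bounded function $\xv \mapsto \pi(\xv) \in [0,1]$ (values are in $[0,1]^n$ and both the welfare objective $\val_{\argmin\{i:\pi_i\le V_i\}}$ and the revenue objective $\pi_{\argmin\{i:\pi_i\le V_i\}}$ lie in $[0,1]$). So $\Pi_\cS$, viewed as a class of real-valued functions on $[0,1]^n$, is uniformly bounded. Second, I would apply the standard symmetrization trick: the uniform deviation $\sup_{\pi} |\tfrac{1}{T}\sum_t \pi(\xv_t) - \pi(\vD)|$ can be bounded, in expectation and with high probability (via a bounded differences inequality such as McDiarmid), by twice the Rademacher complexity of $\Pi_\cS$ restricted to the samples.

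Third, to control the Rademacher complexity, I would use the pseudo-dimension to bound covering numbers. By Haussler's theorem, any class of $[0,1]$-valued functions with pseudo-dimension $d$ has $L_1$-covering number at most $(eT/(\alpha d))^{d}$ at scale $\alpha$. Plugging this into Dudley's entropy integral (or a direct chaining bound) yields a Rademacher complexity of order $\sqrt{d/T}$. Combining with the McDiarmid concentration step and a suitable $\log(1/\delta)$ tail term gives
\[
\sup_{\pi \in \Pi_\cS} \left| \frac{1}{T}\sum_{t=1}^T \pi(\xv_t) - \pi(\vD) \right| = O\!\left(\sqrt{\frac{\pdim(\Pi_\cS) + \log(1/\delta)}{T}}\right)
\]
with probability at least $1-\delta$. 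Setting this bound to $\epsilon$ and solving for $T$ yields the claimed sample size $T = O(\epsilon^{-2}(\pdim(\Pi_\cS) + \log(1/\delta)))$.

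The only real obstacle here is that one has to verify that pseudo-dimension (and not just VC-dimension) is the right combinatorial quantity to plug in, which is why we need Haussler's covering number bound rather than the Sauer–Shelah lemma. Since this entire chain of implications is a well-established result in statistical learning theory, a single-sentence proof by citation to \cite{bousquet2003introduction} (or Anthony–Bartlett, or Vapnik) is fully justified; I would not reproduce the full argument in the paper.
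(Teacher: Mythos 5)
Your proposal is correct and matches the paper's treatment: the paper does not prove this theorem but simply invokes it as a standard uniform convergence result for bounded real-valued classes of finite pseudo-dimension, citing \cite{bousquet2003introduction}. Your sketch (symmetrization, McDiarmid, Haussler's covering bound, and chaining) is the standard derivation of exactly that cited result, and your conclusion that a one-line citation suffices is what the paper in fact does.
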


We will show that $\pdim(\Pi_\cS) = O(k \log k)$, where $k = |\cS| + 1$. This together with the above theorem immediately implies that \Cref{thm:corrPos} holds for the sample average approximation algorithm. 

Let $\cS = \{i_1, i_2, i_3, \ldots, i_{k-1}\}$, where $1 
 < i_1 < i_2 < \ldots < i_{k-1}$. For each $j = 1, 2, \ldots, k$, let $I_j = \{i_{j-1}, \ldots, i_{j}-1\}$, with the convention that $i_0 = 1$ and $i_k = n+1$. In other words, $I_1, I_2, \ldots, I_k$ are consecutive intervals that partition $[n]$, and $\Pi_\cS$ is the class of policies that offers every customer in $I_j$ the same price.  Note that every policy  $\pi \in \Pi_\cS$ can be parameterized by $k$ prices $\rhov = (\rho_1, \rho_2, \ldots, \rho_k)$, where $\rho_j$ is the price offered to customers in $I_j$. In the rest of this proof, we will use $\pi_\rhov$ to denote the policy in $\Pi_\cS$ parameterized by $\rhov$.

By the definition of pseudo-dimension,
\begin{equation}
    \pdim(\Pi_\cS) = \vcdim( \widetilde{\Pi}_{\cS}),
\end{equation}
where $\widetilde{\Pi}_{\cS} := \{(\xv,z) \mapsto \one\{\pi_\rhov(\xv) \geq z\}: \rhov \in \R^k \}$. Let $\widetilde{\Pi}_{\cS}^*$ denote the dual class of $\widetilde{\Pi}_{\cS}$, so
$$\widetilde{\Pi}_{\cS}^* := \left\{\rhov \mapsto \one\{\pi_\rhov(\xv) \geq z\}: \xv \in [0,1]^n, z \in \R \right\}.$$

We will use a result from \cite{balcan21} to bound the VC dimension of $\widetilde{\Pi}_{\cS}$. We state this result below in \Cref{def:decomposable} and \Cref{thm:balcan}. This result essentially says that the pseudo-dimension of the primal class is bounded if the dual class is well-structured. Here, "well-structured" essentially means that the domain can be partitioned into pieces defined by a small number of boundary functions, and the function is simple on each piece.
\begin{defn}[Definition 3.2 in \cite{balcan21}]
    \label{def:decomposable}
     A function class $\mathcal{H} \subseteq \R^{\mathcal{Y}}$ that maps a domain $\mathcal{Y}$ to $\R$ is $(\mathcal{F}, 
     \mathcal{G}, l)$-piecewise
decomposable for a class $\mathcal{G} \subseteq \{0, 1\}^\mathcal{Y}$
of boundary functions and a class $\mathcal{F} \subseteq \R^{\mathcal{Y}}$
 of piece functions
if the following holds: for every $h 
\in \mathcal{H}$, there are $l$ boundary functions $g^{(1)}, \ldots, g^{(l)} \in \mathcal{G}$ and a
piece function $f_b \in \mathcal{F}$ for each bit vector $b \in \{0, 1\}^l$
such that for all $y \in \mathcal{Y}, h(y) = f_{b_y}(y)$ where
$b_y =
(g^{(1)}(y), \ldots, g^{(l)}(y))
\in \{0, 1\}^l$.
\end{defn}
The main theorem in \cite{balcan21} states that if the dual class is $(\mathcal{F}, \mathcal{G}, l)$-piecewise decomposable, then the pseudo-dimension of the primal class is bounded.
\begin{thm}[Theorem 3.3 in \cite{balcan21}]
    \label{thm:balcan}
    Suppose that the dual function class $\mathcal{U}^*$
is $(\mathcal{F}, \mathcal{G}, l)$-piecewise decomposable with boundary functions $\mathcal{G} \subseteq \{0, 1\}^\mathcal{U}$ and piece functions $\mathcal{F} \subseteq \R^{\mathcal{U}}$. The pseudo-dimension of $\mathcal{U}$ is
bounded as follows:
$$\pdim(\mathcal{U}) = O ((\pdim(\mathcal{F}^*) + \vcdim(\mathcal{G}^*)) \ln (\pdim(\mathcal{F}^*) + \vcdim(\mathcal{G}^*)) + \vcdim(\mathcal{G}^*) \ln l).$$
\end{thm}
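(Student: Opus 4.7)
The plan is the standard double-Sauer argument for piecewise-decomposable classes. The starting point is the equivalent reformulation
$$\pdim(\mathcal{U}) = \vcdim(\widetilde{\mathcal{U}}), \qquad \widetilde{\mathcal{U}} = \{(y,z) \mapsto \one\{u(y) \geq z\} : u \in \mathcal{U}\},$$
so it suffices to upper-bound the number of labelings that $\widetilde{\mathcal{U}}$ induces on any finite set of test points $(y_1,z_1), \ldots, (y_N, z_N)$. Each labeling is the bit vector $(\one\{u(y_i) \geq z_i\})_{i=1}^N$, which rewrites using the dual class as $(\one\{h_{y_i}(u) \geq z_i\})_{i=1}^N$, where $h_{y_i} \in \mathcal{U}^*$ is the dual function of $y_i$.

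First, I would apply piecewise decomposability to each $h_{y_i}$: pick $l$ boundary functions $g^{(i,1)}, \ldots, g^{(i,l)} \in \mathcal{G}$ and pieces $f^{(i)}_b \in \mathcal{F}$ so that $h_{y_i}(u) = f^{(i)}_{b^{(i)}(u)}(u)$ with $b^{(i)}(u) = (g^{(i,1)}(u), \ldots, g^{(i,l)}(u))$. Collecting across $i$ yields at most $Nl$ boundary functions in $\mathcal{G}$. The crucial observation is that the long bit vector $B(u) = (g^{(i,j)}(u))_{i,j}$ of length $Nl$ takes only polynomially many distinct values as $u$ ranges over $\mathcal{U}$: viewed from the dual side, these are labelings of $Nl$ points in $\mathcal{G}$ induced by elements of $\mathcal{G}^*$, so Sauer--Shelah applied to $\mathcal{G}^*$ bounds the count by $(eNl/d_G)^{d_G}$, where $d_G = \vcdim(\mathcal{G}^*)$.

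Second, I would condition on a fixed value of $B(u)$ and count labelings within the corresponding equivalence class of $\mathcal{U}$. Once $B(u)$ is fixed, the sub-bit $b^{(i)}(u)$ is determined for every $i$, so the piece picked out at $y_i$ is some fixed $f^{(i)} \in \mathcal{F}$. The labeling then reduces to $(\one\{f^{(i)}(u) \geq z_i\})_{i=1}^N$. Viewed once more from the dual side, this is labeling the $N$ pairs $(f^{(i)}, z_i)$ by the class $\{f \mapsto \one\{f(u) \geq \cdot\,\} : u \in \mathcal{U}\}$, whose VC dimension is exactly $\pdim(\mathcal{F}^*) =: d_F$; a second application of Sauer--Shelah bounds the in-class count by $(eN/d_F)^{d_F}$.

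Third, multiplying the two bounds gives
$$\lvert \{\text{labelings induced by } \widetilde{\mathcal{U}}\} \rvert \leq \left(\frac{eNl}{d_G}\right)^{d_G}\left(\frac{eN}{d_F}\right)^{d_F}.$$
Shattering requires this to be at least $2^N$; taking logs gives $N \ln 2 \leq d_G \ln(eNl/d_G) + d_F \ln(eN/d_F)$, and standard iteration of $N \leq a \ln N + b$ yields $N = O((d_F + d_G)\ln(d_F+d_G) + d_G \ln l)$, which is the claimed bound. The main subtlety to watch is the bookkeeping of the two uses of Sauer's lemma \emph{in the dual direction} (labelings of subsets of $\mathcal{G}$ or $\mathcal{F}$ by points of $\mathcal{U}$, using $\vcdim(\mathcal{G}^*)$ and $\pdim(\mathcal{F}^*)$ rather than the primal quantities), and verifying that the $\ln l$ factor attaches only to the boundary term, since conditioning on $B(u)$ collapses each piece to a single element of $\mathcal{F}$ independent of $l$.
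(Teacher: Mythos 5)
The paper does not prove this statement itself; it is quoted verbatim as Theorem~3.3 of \cite{balcan21}, so the only meaningful comparison is with the proof in that reference. Your reconstruction is essentially that proof: the double application of Sauer--Shelah in the dual direction --- first bounding the number of sign patterns of the $Nl$ boundary functions via $\vcdim(\mathcal{G}^*)$, then, conditional on a fixed pattern (which pins down one piece function per test point), bounding the labelings via $\pdim(\mathcal{F}^*)$ --- followed by the standard inversion of $2^N \le (eNl/d_G)^{d_G}(eN/d_F)^{d_F}$. The argument is correct, and you correctly place the $\ln l$ factor on the boundary term only.
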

We now apply \Cref{thm:balcan} to bound the VC dimension of $\widetilde{\Pi}_{\cS}$.\footnote{Note \Cref{thm:balcan} is stated to bound the pseudo-dimension, but the pseudo-dimension coincides with the VC dimension for function classes consisting of $\{0,1\}$-valued functions.} To do so we must show that the dual class $\widetilde{\Pi}^*_{\cS}$ is $(\mathcal{F}, \mathcal{G}, l)$-piecewise decomposable for "nice" classes $\mathcal{F}$ and $\mathcal{G}$. We will show that for both the welfare and revenue objectives, we can take $\mathcal{F}$ to be the family of constant functions and $\mathcal{G}$ to be the family of axis-aligned halfspaces. This will follow from the below lemma.

\begin{lemma}
\label{clm:good_kthresh}
Let $\xv \in [0,1]^n$ and let $z \in \R$. Let $G(\xv, z) = \{\rhov \in \R^k: \pi_\rhov(\xv) \geq z\}$. For both the welfare and revenue objectives, there are $l_1,u_1,\cdots,l_k,u_k \in \R \cup \{\pm\infty\}$ such that 
\begin{equation}
\textstyle
G(\xv, z)
= \bigcup_{j=1}^k \, (u_1, \infty) \times \cdots \times (u_{j-1}, \infty) \times (l_j, u_j] \times \R^{k-j}.
\end{equation}

\end{lemma}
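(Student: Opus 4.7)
The plan is to decompose $G(\xv,z)$ according to which interval $I_j$ contains the winning buyer, and then to read off the admissible set of $\rho_j$ within each piece. Set $M_j := \max_{i \in I_j} x_i$. Under $\pi_\rhov$, the winning interval is the smallest $j$ with $\rho_j \le M_j$; equivalently, ``winner lies in $I_j$'' is the conjunction of $\rho_{j'} > M_{j'}$ for all $j' < j$ together with $\rho_j \le M_j$, and places no constraint on $\rho_{j'}$ for $j' > j$. This already yields the Cartesian-product skeleton of the $j$-th summand with thresholds $u_{j'} = M_{j'}$ for $j' < j$. Crucially, these thresholds are independent of $j$, so one global list of $u$-values is consistent across the entire union.

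It remains to impose $\pi_\rhov(\xv) \ge z$ and to show that the admissible $\rho_j$ (conditioned on winner in $I_j$) is a single interval of the form $(l_j, u_j]$ with $u_j = M_j$. For revenue, $\pi_\rhov(\xv) = \rho_j$ on this piece, so the extra constraint $\rho_j \ge z$ yields $[z, M_j]$ (empty when $M_j < z$); this is of the desired form with $u_j = M_j$ and $l_j$ chosen just below $z$ (the open/closed convention at the left endpoint is immaterial for the downstream halfspace argument).

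For welfare, $\pi_\rhov(\xv) = x_{i^\ast}$ where $i^\ast := \min\{i \in I_j : x_i \ge \rho_j\}$. Define $\hat\imath_j(z) := \min\{i \in I_j : x_i \ge z\}$ (which exists iff $M_j \ge z$) and $B_j(z) := \max\{x_i : i \in I_j,\ i < \hat\imath_j(z)\}$, set to $-\infty$ if that set is empty. The key claim is: $x_{i^\ast} \ge z$ iff $M_j \ge z$ and $\rho_j > B_j(z)$. For the forward direction, $\rho_j > B_j(z)$ forces every $i < \hat\imath_j(z)$ in $I_j$ to have $x_i \le B_j(z) < \rho_j$ and hence be rejected, so $i^\ast \ge \hat\imath_j(z)$; a short case split on whether $\rho_j \le z$ or $\rho_j > z$ then shows $x_{i^\ast} \ge z$ in both subcases (if $\rho_j \le z$, then $\hat\imath_j(z)$ is itself eligible, so $i^\ast = \hat\imath_j(z)$; if $\rho_j > z$, then $x_{i^\ast} \ge \rho_j > z$). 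The converse is immediate: if $\rho_j \le B_j(z)$ some $i < \hat\imath_j(z)$ is eligible, and the first such has value strictly less than $z$. Combined with $\rho_j \le M_j$, this yields $(B_j(z), M_j]$, matching $(l_j, u_j]$ with $l_j = B_j(z)$ and $u_j = M_j$.

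The only real work is in the welfare case: the characterization of $x_{i^\ast} \ge z$ via $\rho_j > B_j(z)$ requires the case split above, and one must check that $B_j(z) < M_j$ whenever $M_j \ge z$ (which follows from $B_j(z) < z \le M_j$, by definition of $\hat\imath_j(z)$). The clean $2k$-threshold form of the statement rests on the observation that the upper endpoint $u_{j'} = M_{j'}$ in piece $j'$ coincides with the lower endpoint of the factor $(M_{j'}, \infty)$ appearing in every later piece $j > j'$, so a single list $(l_1, u_1, \ldots, l_k, u_k)$ of thresholds suffices globally.
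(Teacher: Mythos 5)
Your proof is correct and follows essentially the same route as the paper's: decompose $G(\xv,z)$ by the interval containing the winning buyer, take $u_j = \max(\xv(I_j))$, and take $l_j$ to be the largest value in $I_j$ preceding the first value that is $\geq z$ (for welfare) or essentially $z$ itself (for revenue), with the same forward/converse verification. The only point you skip is the trivial case $z \le 0$, where $G(\xv,z) = \R^k$ because the no-sale outcome also has objective $\ge z$; there one must set $u_k = \infty$ rather than $\max(\xv(I_k))$, since your union as constructed covers only price vectors under which some buyer actually accepts.
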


\begin{proof}

If $z \leq 0$ then $G(\xv, z) = \R^k$, so  we may take any values of $l_1,u_1, \ldots, l_k,u_k$ such that $l_1=l_2=\cdots = l_k - \infty$ and $u_k = \infty$.

For the rest of the proof, assume $z > 0$. First, consider the {welfare objective}. We show the lemma holds for  $l_j$ and $u_j$ as follows: 

Define $l_j$ to be $-\infty$ if $\xv(I_j)_1 \geq z$, and otherwise to be $\max\{\xv(I_j)_1, \ldots, \xv(I_j)_{m-1}\}$, where $m$ is the smallest index such that $\xv(I_j)_m \geq z$. (If $\xv(I_j)_m < z$ for all $m$, set $l_j = \max(\xv(I_j))$.)

Define $u_j$ to be $\max(\xv(I_j))$.

Let $G = \{\rhov \in \R^k: \pi_{\rhov}(\xv) \geq z\}$ and $G' =  \bigcup_{j=1}^k \, (u_1, \infty) \times \cdots \times (u_{j-1}, \infty) \times (l_j, u_j] \times \R^{k-j}$. To show that $G = G'$, we'll show $G \subseteq G'$ and $G' \subseteq G$.

\underline{Case 1 ($G \subseteq G'$}). If $G = \emptyset$ we are done, so assume otherwise. Let $\rhov \in G$. Let $j$ be the interval such that the algorithm (using thresholds $\rhov$) accepts a value in $\xv(I_j)$. Since a value in interval $j$ was accepted, $\rho_j \leq \max(\xv(I_j)) = u_j$. Also, since no value in the previous intervals were accepted, $\rho_i > u_i$ for all $i < j$. Finally, we must have $\rho_j > l_j$, since otherwise the accepted value will be less than $z$. Thus $\rhov \in G'$.

\underline{Case 2 ($G' \subseteq G$)}. Let $\rhov \in G'$. Then $\rhov \in (u_1, \infty) \times \cdots \times (u_{j-1}, \infty) \times (l_j, u_j] \times \R^{k-j}$ for some $j$. Since $\rho_i > u_i$ for all $i < j$, the algorithm using thresholds $\rhov$ does not accept any value in the intervals $i < j$. Since $\rho_j \in (l_j, u_j]$, the definitions of $l_j$ and $u_j$ imply that a value in $\xv(I_j)$ is accepted, and this value is greater than or equal to $z$. Thus $\rhov \in G$. 

For revenue, the only difference is that we define $l_j$ to be $\min(z, \max(\xv(I_j)))$. ($u_j$ is still defined to be $\max(\xv(I_j))$.) With these definitions of $l_j$ and $u_j$, a very similar analysis shows that $G = G'$ in the revenue case as well.

\end{proof}

\begin{cor}
\label{cor:decomposable}
$\widetilde{\Pi}^*_{\cS}$ is $(\mathcal{F}, \mathcal{G}, l)$-piecewise decomposable, where $\mathcal{F}$ is the set of constant functions, $\mathcal{G}$ is the set of axis-aligned halfspaces, and  $l = 2(|\cS| + 1)$. 
\end{cor}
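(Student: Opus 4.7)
The plan is to apply \Cref{clm:good_kthresh} directly: it expresses the good set $\{\rhov : \pi_\rhov(\xv) \geq z\}$ as a union of $k$ boxes whose sides are determined by two thresholds $l_j, u_j$ per coordinate. Since each function in $\widetilde{\Pi}^*_{\cS}$ is $\{0,1\}$-valued, once I place an indicator on each side of each of these $2k$ thresholds, the value of the function is constant on every region cut out by those boundaries. Hence I can take $\mathcal{F}$ to be the family of constant functions and $\mathcal{G}$ to be the family of axis-aligned halfspaces.

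Concretely, fix $h \in \widetilde{\Pi}^*_{\cS}$, corresponding to some $(\xv, z) \in [0,1]^n \times \R$. Extract the numbers $l_1, u_1, \ldots, l_k, u_k$ guaranteed by \Cref{clm:good_kthresh}, and define the $l = 2k = 2(|\cS|+1)$ boundary functions
\[ g^{(2j-1)}(\rhov) = \one\{\rho_j > l_j\}, \qquad g^{(2j)}(\rhov) = \one\{\rho_j > u_j\}, \qquad j = 1, \ldots, k. \]
Each of these is the indicator of an axis-aligned halfspace, hence lies in $\mathcal{G}$. For each bit vector $b \in \{0,1\}^{2k}$, set the constant piece function $f_b \equiv 1$ if there exists $j$ such that $b_{2i} = 1$ for all $i < j$, $b_{2j-1} = 1$, and $b_{2j} = 0$; otherwise set $f_b \equiv 0$. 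These are exactly the bit patterns encoding membership in the box $\{\rho_i > u_i \text{ for all } i < j\} \cap \{l_j < \rho_j \leq u_j\}$ appearing in the lemma's decomposition.

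Verification that $h(\rhov) = f_{b_\rhov}(\rhov)$ for all $\rhov$ follows directly from \Cref{clm:good_kthresh}: the bit vector $b_\rhov = (g^{(1)}(\rhov), \ldots, g^{(2k)}(\rhov))$ records which halfspaces contain $\rhov$, and the lemma says $\pi_\rhov(\xv) \geq z$ if and only if $b_\rhov$ matches one of the "good" bit patterns just described. I do not anticipate a real obstacle here; the only mild subtlety is aligning the strict versus non-strict inequalities in the lemma's boxes with the precise convention defining $\mathcal{G}$, but any standard notion of axis-aligned halfspace (open or closed) accommodates this by flipping the corresponding boundary function, so the argument goes through unchanged.
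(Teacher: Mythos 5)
Your proposal is correct and follows essentially the same route as the paper: both apply \Cref{clm:good_kthresh} to extract the thresholds $l_1,u_1,\ldots,l_k,u_k$, take the $2k$ corresponding axis-aligned halfspaces as boundary functions, and observe that the indicator is constant on each resulting piece. Your explicit enumeration of which bit patterns map to the constant function $1$ is just a more spelled-out version of the paper's identity $\{\rhov : h(\rhov)=1\} = \bigcup_{j=1}^k \bar{U}_1 \cap \cdots \cap \bar{U}_{j-1} \cap L_j \cap U_j$.
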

\begin{proof}
    Consider a  function $h \in \widetilde{\Pi}^*_{\cS}$. By definition of $\widetilde{\Pi}^*_{\cS}$, there exist $\xv \in [0,1]^n$ and $z \in \R$ such that $h(\rhov) = \one\{\pi_{\rhov}(\xv) \geq z\}$. By \Cref{clm:good_kthresh}, there are $l_1, u_1, \ldots, l_k,u_k$ such that 
    \[ \textstyle
    \{\rho \in \R^k: h(\rho) = 1\} = \bigcup_{j=1}^k \, (u_1, \infty) \times \cdots \times (u_{j-1}, \infty) \times (l_j, u_j] \times \R^{k-j}.
    \]
    For each $j \in [k]$, let $L_j = \{\rhov \in \R^k: \rho_j > l_j\}$ and $U_j = \{\rhov \in \R^k: \rho_j \leq u_j \}$. Note that $L_j$ and $U_j$ are axis-aligned halfspaces, and
$$
\textstyle \{\rho \in \R^k: h(\rho) = 1\} = \bigcup_{j=1}^k \bar{U}_1 \cap \bar{U}_2 \cap \cdots \cap \bar{U}_{j-1} \cap L_j \cap U_j.$$
Therefore, in the definition of $(\mathcal{F}, \mathcal{G}, l)$-piecewise decomposable, we may take the boundary functions to be the $2k$ functions corresponding to the halfspaces $L_1, U_1, \ldots, L_k, U_k$. On any given piece defined by these boundary functions, $h$ is a constant function (equal to either 0 or 1).
\end{proof}

For  $\mathcal{F}$  the set of constant functions and $\mathcal{G}$ the set of axis-aligned halfspaces in $\R^k$, it is easy to check that $\pdim(\mathcal{F}^*) =0$ and $\vcdim(\mathcal{G}^*) = k$. Combining \Cref{cor:decomposable} with \Cref{thm:balcan}, we get that
$$\vcdim(\widetilde{\Pi}_{\cS}) = O\left( k \ln k + k \ln 2k\right) = O(k \ln k).$$
This completes the proof of \Cref{thm:corrPos}.

\noindent \textbf{Remark.}
If we directly apply \Cref{thm:balcan} from \cite{balcan21} to bound $\pdim(\Pi_\cS)$, then we would get
\begin{itemize}
    \item A $O(k\ln k)$ pseudo-dimension bound for the revenue objective;
    \item A $O(k\ln(kn))$ pseudo-dimension bound for welfare objective.
\end{itemize}
In particular, the bound for welfare would grow with $n$. This dependence on $n$ is in fact unavoidable if one works with $\Pi_\cS$ directly, because for instances like $\xv = (\frac{1}{n}, \frac{2}{n}, \ldots, 1)$, the dual function corresponding to $\xv$ is piecewise constant with $\Theta(n)$ pieces, even if $k = 1$.  This is why our \Cref{cor:decomposable} focuses on the indicator functions $\widetilde{\Pi}^*_{\cS}$ instead.  Regardless, both the bounds for welfare and revenue require our \Cref{clm:good_kthresh} that analyzes the problem-specific structure of the "good sets".

\subsection{Negative Result for Welfare and Revenue: Proof of \Cref{thm:corrNeg}} \label{sec:corrNeg}

We let $\cS'\subseteq\{1\}\cup\cS$ be a subset of decision points such that price $\pi_i$ can be freely chosen for all $i\in\cS'$.  For welfare maximization we require $i+1\in\{1,\ldots,n\}\setminus\cS'$ for all $i\in\cS'$, which can achieved while ensuring $|\cS'|\ge\lfloor\frac{1+|\cS|}2\rfloor$.  Each decision point $i\in\cS'$ has marginal value distribution(s) that could be "High" or "Low", separately for each decision point (i.e., all $2^{|\cS'|}$ combinations are possible).  For welfare maximization, they are defined as
\begin{itemize}
\item High: $V_i=1/2$ with probability (w.p.) 1, $V_{i+1}=1$ w.p.~$1/2+\eps$, $V_{i+1}=0$ w.p.~$1/2-\eps$;
\item Low: $V_i=1/2$ w.p.~1, $V_{i+1}=1$ w.p.~$1/2-\eps$, $V_{i+1}=0$ w.p.~$1/2+\eps$.
\end{itemize}
For revenue maximization, they are defined as
\begin{itemize}
\item High: $V_i=1$ w.p.~$1/2+\eps$, $V_i=1/2$ w.p.~$1/2-\eps$;
\item Low: $V_i=1$ w.p.~$1/2-\eps$, $V_i=1/2$ w.p.~$1/2+\eps$.
\end{itemize}
The overall distribution over trajectories $\vV$ is then correlated as follows.  First, a decision point $\ti\in\cS'$ is drawn uniformly at random.  Then, $V_{\ti}$ (as well as $V_{\ti+1}$ in the case of welfare maximization) is drawn according to the distributions above, depending on whether decision point $\ti$ is High or Low.  All other buyer values are 0 on this trajectory.

Given this, only the prices for decision point $\ti$ are relevant, which can without loss be restricted to $\{1/2,1\}$.  The policy should optimize $\pi_i$ for each $i\in\cS'$ as if $\ti=i$.
For either welfare or revenue maximization, the constructions above can be checked to satisfy the following properties:
\begin{enumerate}
\item
Setting $\pi_i$ (and $\pi_{i+1}$ in the case of welfare maximization) to 1 is optimal for High and earns objective $1/2+\eps$ in expectation;
setting to 1/2 is optimal for Low and earns objective $1/2$;
\item Setting the wrong price earns expected objective 1/2 for High and $1/2-\eps$ for Low, incurring a loss of $\eps$ compared to optimal in both cases.
\item The High and Low distributions have the same support and probabilities that differ by $2\eps$.
\end{enumerate}
We note that for welfare maximization, it does not matter whether $\pi_{i+1}\in\cS$, because the decision is on whether to accept buyer $i$ when $V_i=1/2$.

Finally, a policy has additive error $\Omega(\eps)$ as long as it has constant probability of setting the wrong price for a decision point.
Due to property 3.\ above, the policy needs $\Omega(\frac1{\eps^2})$ relevant observations for a given decision point to avoid setting the wrong price for it with constant probability.
However, each sample contains a relevant observation for a given decision point only with probability $1/|\cS'|$, so $\Omega(\frac{|\cS'|}{\eps^2})$ samples are necessary for the number of relevant observations to be $\Omega(\frac{1}{\eps^2})$ with high probability.
Because $|\cS'|\ge\lfloor\frac{1+|\cS|}2\rfloor$, this completes the proof of \Cref{thm:corrNeg}.

\paragraph{Acknowledgement}
This work was done in part
while the authors were visiting the Simons Institute for the Theory of Computing for the program on Data-Driven Decision Processes.
The authors thank Zhuoxin Chen for identifying typos in an early version.

\bibliographystyle{plain}
\bibliography{bibliography}

\begin{thebibliography}{10}

\bibitem{AKS-SODA21}
Sepehr Assadi, Thomas Kesselheim, and Sahil Singla.
\newblock Improved truthful mechanisms for subadditive combinatorial auctions:
  Breaking the logarithmic barrier.
\newblock In {\em Proceedings of {SODA}}, pages 653--661, 2021.

\bibitem{AusubelMilgrom06}
Lawrence~M Ausubel, Paul Milgrom, et~al.
\newblock The lovely but lonely vickrey auction.
\newblock {\em Combinatorial auctions}, 17(3):22--26, 2006.

\bibitem{AKW-SODA14}
Pablo~Daniel Azar, Robert Kleinberg, and S.~Matthew Weinberg.
\newblock Prophet inequalities with limited information.
\newblock In Chandra Chekuri, editor, {\em {ACM-SIAM} Symposium on Discrete
  Algorithms, {SODA}}, pages 1358--1377, 2014.

\bibitem{balcan21}
Maria-Florina Balcan, Dan DeBlasio, Travis Dick, Carl Kingsford, Tuomas
  Sandholm, and Ellen Vitercik.
\newblock How much data is sufficient to learn high-performing algorithms?
  generalization guarantees for data-driven algorithm design.
\newblock In {\em Proceedings of the 53rd Annual ACM SIGACT Symposium on Theory
  of Computing, {STOC}}, page 919–932, 2021.

\bibitem{bousquet2003introduction}
Olivier Bousquet, St{\'e}phane Boucheron, and G{\'a}bor Lugosi.
\newblock Introduction to statistical learning theory.
\newblock In {\em Summer School on Machine Learning}, pages 169--207. Springer,
  2003.

\bibitem{CaramanisDFFLLP-SODA22}
Constantine Caramanis, Paul D{\"{u}}tting, Matthew Faw, Federico Fusco, Philip
  Lazos, Stefano Leonardi, Orestis Papadigenopoulos, Emmanouil Pountourakis,
  and Rebecca Reiffenh{\"{a}}user.
\newblock Single-sample prophet inequalities via greedy-ordered selection.
\newblock In {\em {ACM-SIAM} Symposium on Discrete Algorithms, {SODA}}, pages
  1298--1325, 2022.

\bibitem{unit-demand-2007}
Shuchi Chawla, Jason~D. Hartline, and Robert Kleinberg.
\newblock Algorithmic pricing via virtual valuations.
\newblock In {\em Proceedings of the 8th ACM Conference on Electronic
  Commerce}, EC '07, page 243–251, 2007.

\bibitem{unit-demand-2010}
Shuchi Chawla, David~L. Malec, and Balasubramanian Sivan.
\newblock The power of randomness in bayesian optimal mechanism design.
\newblock In {\em Proceedings of the 11th ACM Conference on Electronic
  Commerce}, EC '10, page 149–158, 2010.

\bibitem{CR-STOC14}
Richard Cole and Tim Roughgarden.
\newblock The sample complexity of revenue maximization.
\newblock In {\em Symposium on Theory of Computing, {STOC}}, pages 243--252.
  {ACM}, 2014.

\bibitem{CDFS-MOR22}
Jos{\'{e}} Correa, Paul D{\"{u}}tting, Felix~A. Fischer, and Kevin Schewior.
\newblock Prophet inequalities for independent and identically distributed
  random variables from an unknown distribution.
\newblock {\em Math. Oper. Res.}, 47(2):1287--1309, 2022.

\bibitem{correa2019recent}
Jose Correa, Patricio Foncea, Ruben Hoeksma, Tim Oosterwijk, and Tjark
  Vredeveld.
\newblock Recent developments in prophet inequalities.
\newblock {\em ACM SIGecom Exchanges}, 17(1):61--70, 2019.

\bibitem{CorreaC23}
Jos{\'{e}}~R. Correa and Andr{\'{e}}s Cristi.
\newblock A constant factor prophet inequality for online combinatorial
  auctions.
\newblock In {\em Proceedings of STOC}, pages 686--697, 2023.

\bibitem{CFPV-IPL19}
Jos{\'{e}}~R. Correa, Patricio Foncea, Dana Pizarro, and Victor Verdugo.
\newblock From pricing to prophets, and back!
\newblock {\em Oper. Res. Lett.}, 47(1):25--29, 2019.

\bibitem{DKLRS-arXiv24}
Paul Duetting, Thomas Kesselheim, Brendan Lucier, Rebecca Reiffenhauser, and
  Sahil Singla.
\newblock Online combinatorial allocations and auctions with few samples.
\newblock In {\em 65th {IEEE} Annual Symposium on Foundations of Computer
  Science, {FOCS}}, 2024.

\bibitem{DK-EC19}
Paul D{\"{u}}tting and Thomas Kesselheim.
\newblock Posted pricing and prophet inequalities with inaccurate priors.
\newblock In {\em Conference on Economics and Computation, {EC}}, pages
  111--129. {ACM}, 2019.

\bibitem{DuttingKL20}
Paul D{\"{u}}tting, Thomas Kesselheim, and Brendan Lucier.
\newblock An o(log log m) prophet inequality for subadditive combinatorial
  auctions.
\newblock In {\em Proceedings of FOCS}, pages 306--317, 2020.

\bibitem{FeldmanGL15}
Michal Feldman, Nick Gravin, and Brendan Lucier.
\newblock Combinatorial auctions via posted prices.
\newblock In {\em Proceedings of SODA}, 2015.

\bibitem{freedman1975tail}
David~A Freedman.
\newblock On tail probabilities for martingales.
\newblock {\em the Annals of Probability}, pages 100--118, 1975.

\bibitem{GKSW-SODA24}
Khashayar Gatmiry, Thomas Kesselheim, Sahil Singla, and Yifan Wang.
\newblock Bandit algorithms for prophet inequality and pandora's box.
\newblock In {\em Proceedings of the thirty-sixth annual ACM-SIAM symposium on
  Discrete Algorithms, {SODA}}, 2024.

\bibitem{gibbs2002choosing}
Alison~L Gibbs and Francis~Edward Su.
\newblock On choosing and bounding probability metrics.
\newblock {\em International statistical review}, 70(3):419--435, 2002.

\bibitem{guan2022randomized}
Xinyi Guan and Velibor~V Mi{\v{s}}i{\'c}.
\newblock Randomized policy optimization for optimal stopping.
\newblock {\em arXiv preprint arXiv:2203.13446}, 2022.

\bibitem{guo2021generalizing}
Chenghao Guo, Zhiyi Huang, Zhihao~Gavin Tang, and Xinzhi Zhang.
\newblock Generalizing complex hypotheses on product distributions: Auctions,
  prophet inequalities, and pandora’s problem.
\newblock In {\em Conference on Learning Theory, {COLT}}, pages 2248--2288.
  PMLR, 2021.

\bibitem{GHZ-STOC19}
Chenghao Guo, Zhiyi Huang, and Xinzhi Zhang.
\newblock Settling the sample complexity of single-parameter revenue
  maximization.
\newblock In {\em Proceedings of the 51st Annual {ACM} {SIGACT} Symposium on
  Theory of Computing, {STOC}}, pages 662--673. {ACM}, 2019.

\bibitem{HKS-AAAI07}
Mohammad~Taghi Hajiaghayi, Robert~D. Kleinberg, and Tuomas Sandholm.
\newblock Automated online mechanism design and prophet inequalities.
\newblock In {\em Proceedings of the Twenty-Second {AAAI} Conference on
  Artificial Intelligence}, pages 58--65. {AAAI} Press, 2007.

\bibitem{HartlineBook}
Jason~D Hartline.
\newblock Mechanism design and approximation.
\newblock {\em Book draft. October}, 122(1), 2013.

\bibitem{HR-EC09}
Jason~D. Hartline and Tim Roughgarden.
\newblock Simple versus optimal mechanisms.
\newblock In {\em Proceedings 10th {ACM} Conference on Electronic Commerce
  (EC-2009)}, pages 225--234. {ACM}, 2009.

\bibitem{ISW-EC20}
Nicole Immorlica, Sahil Singla, and Bo~Waggoner.
\newblock Prophet inequalities with linear correlations and augmentations.
\newblock In {\em Conference on Economics and Computation, {EC}}, pages
  159--185, 2020.

\bibitem{KL-FOCS03}
Robert~D. Kleinberg and Frank~Thomson Leighton.
\newblock The value of knowing a demand curve: Bounds on regret for online
  posted-price auctions.
\newblock In {\em 44th Symposium on Foundations of Computer Science, {FOCS}},
  pages 594--605. {IEEE} Computer Society, 2003.

\bibitem{KrengelS77}
U.~Krengel and L.~Sucheston.
\newblock Semiamarts and finite values.
\newblock {\em Bulletin of the American Mathematical Society}, 83:745–747,
  1977.

\bibitem{KrengelS78}
U.~Krengel and L.~Sucheston.
\newblock On semiamarts, amarts, and processes with finite value.
\newblock {\em Advances in Probability and Related Topics}, 4:197–266, 1978.

\bibitem{LSTW-SODA23}
Renato~Paes Leme, Balasubramanian Sivan, Yifeng Teng, and Pratik Worah.
\newblock Pricing query complexity of revenue maximization.
\newblock In {\em Proceedings of the thirty-fourth annual ACM-SIAM symposium on
  Discrete Algorithms}, 2023.

\bibitem{LPS-EC24}
Vasilis Livanos, Kalen Patton, and Sahil Singla.
\newblock Improved mechanisms and prophet inequalities for graphical
  dependencies.
\newblock In {\em Conference on Economics and Computation, {EC}}, 2024.

\bibitem{lucier2017economic}
Brendan Lucier.
\newblock An economic view of prophet inequalities.
\newblock {\em ACM SIGecom Exchanges}, 16(1):24--47, 2017.

\bibitem{MR-COLT16}
Jamie Morgenstern and Tim Roughgarden.
\newblock Learning simple auctions.
\newblock In {\em Proceedings of the 29th Conference on Learning Theory, {COLT}
  2016}, volume~49 of {\em {JMLR} Workshop and Conference Proceedings}, pages
  1298--1318. JMLR.org, 2016.

\bibitem{Myerson81}
Roger~B. Myerson.
\newblock Optimal auction design.
\newblock {\em Math. Oper. Res.}, 6(1):58--73, 1981.

\bibitem{roughgarden2017twenty}
Tim Roughgarden.
\newblock {\em Twenty lectures on algorithmic game theory}.
\newblock Cambridge University Press, 2016.

\bibitem{RWW-ITCS20}
Aviad Rubinstein, Jack~Z. Wang, and S.~Matthew Weinberg.
\newblock Optimal single-choice prophet inequalities from samples.
\newblock In {\em 11th Innovations in Theoretical Computer Science Conference,
  {ITCS}}, volume 151, pages 60:1--60:10.

\bibitem{SW-arXiv23}
Sahil Singla and Yifan Wang.
\newblock Bandit sequential posted pricing via half-concavity.
\newblock In {\em Conference on Economics and Computation, {EC}}. {ACM}, 2024.

\bibitem{Vickrey61}
William Vickrey.
\newblock Counterspeculation, auctions, and competitive sealed tenders.
\newblock {\em The Journal of finance}, 16(1):8--37, 1961.

\bibitem{Wainwright2019}
Martin~J Wainwright.
\newblock {\em High-dimensional statistics: A non-asymptotic viewpoint},
  volume~48.
\newblock Cambridge university press, 2019.

\bibitem{xie2024vc}
Yaqi Xie, Will Ma, and Linwei Xin.
\newblock Vc theory for inventory policies.
\newblock {\em arXiv preprint arXiv:2404.11509}, 2024.

\end{thebibliography}


\end{document}